\documentclass[]{article}
\usepackage[utf8]{inputenc}
\usepackage[margin=1in]{geometry}
\usepackage{rotating}
\usepackage{booktabs}
\usepackage{array}
\usepackage{amsmath, amsfonts}
\usepackage{parskip}
\usepackage{color}
\usepackage[hidelinks]{hyperref}
\usepackage{bbm}
\usepackage{algorithm}
\usepackage[noend]{algpseudocode}
\usepackage{subcaption} 
\usepackage{enumitem}

\usepackage{mathtools}
\mathtoolsset{showonlyrefs}

\usepackage[authoryear,round]{natbib}

\usepackage{amsthm}
\newtheorem{theorem}{Theorem} 
\newtheorem{prop}{Proposition}

\newtheorem{definition}{Definition}
\newtheorem{remark}{Remark}

\usepackage{mathtools}

\def\Ical{\mathcal{I}}

\def\R{\mathbb{R}}
\def\E{\mathbb{E}}

\newcommand{\cost}{C}
\newcommand{\marketsize}{R}

\providecommand{\keywords}[1]
{
  \small	
  \textbf{\textit{Keywords---}} #1
}

\title{\vspace{-1.5cm} Principal-Agent Hypothesis Testing}
\author{Stephen Bates, Michael I.~Jordan, Michael Sklar, Jake A.~Soloff}
\date{\today}

\begin{document}

\maketitle

\begin{abstract}
Consider the relationship between a regulator (the \textit{principal}) and an experimenter (the \textit{agent}) such as a pharmaceutical company. 
The pharmaceutical company wishes to sell a drug for profit, whereas the regulator wishes to allow only efficacious drugs to be marketed. 
The efficacy of the drug is not known to the regulator, so the pharmaceutical company must run a costly trial to prove efficacy to the regulator. 
Critically, the statistical protocol used to establish efficacy affects the behavior of a strategic, self-interested agent; a lower standard of statistical evidence incentivizes the agent to run more trials that are less likely to be effective. 
The interaction between the statistical protocol and the incentives of the pharmaceutical company is crucial for understanding this system and designing protocols with high social utility. 
In this work, we discuss how the regulator can set up a protocol with payoffs based on statistical evidence. 
We show how to design protocols that are robust to an agent's strategic actions, and derive the optimal protocol in the presence of strategic entrants.
\end{abstract}

\keywords{e-values, contract theory, mechanism design, incentives, statistical inference, clinical trials}

\section{Introduction}
\label{sec:intro}

Across government, academia and business, regulators rely on statistical tests to control access to influence and markets. Medical regulators test drugs, journal reviewers assess scientific claims, and managers in technology or finance companies evaluate system innovations. Researchers---pharmaceutical companies, scientists or software engineers---for their part, bear various upfront costs of designing and running their experiments. If their proposals are approved, they are personally rewarded, either financially or with prestige or other intangibles. The researchers are thus, effectively, taking a calculated bet on their proposal. 

Regulators, on the other hand, have a mandate to set up a trustworthy system that not only performs valid statistical inference but also delivers high social utility. 
In particular, if they disregard the researcher's profit motive, the regulator risks being exploitable.
That is, if it is profitable for a researcher to bluff---for a pharmaceutical company to submit an ineffective drug, a scientist to study a treatment with no effect, an engineer to submit an ineffective feature---then the regulator risks becoming overrun by low-value activity, and statistical errors are amplified, leading to low social utility. In this paper, we propose a decision-theoretic framework which explicitly accounts for researcher incentives, such that ineffectual proposals are not profitable. 

We consider as a motivating example the Food and Drug Administration (FDA), which regulates trillions of dollars of activity in the United States by controlling approval for medical treatments based on clinical trials. Here, commercial ventures first carry out background research to develop a candidate treatment, and then they must decide whether to sponsor a clinical trial typically costing millions of dollars.
If the drug is approved based on the results of the clinical trial, the company is rewarded with access to a market worth millions or even billions of dollars. Companies consider their financial calculus of whether to sponsor a clinical trial based on their private information, including the preliminary results of earlier research. 
Crucially, this financial calculus depends on the standard of evidence that the FDA implements---looser standards incentivize more clinical trials being run for less promising candidates. 
We explore in Section~\ref{sec-fda} how some FDA pathways are potentially exploitable for some high-profit drug categories.

By formalizing the bet the researcher is already taking, the regulator establishes an economic basis for statistical inference. Betting scores have recently resurfaced as a distinct and powerful approach to statistical inference~\citep[see, e.g.][]{shafer2021testing, ramdas2023game}. Our work builds on this line of work and connects game-theoretic statistical inference to contract theory, a body of work in the general area of economic mechanism design \citep{hart1987theory}. As a form of evidence, betting scores avoid some of the pathologies of significance testing, and by incorporating statistical evidence in an economic contract, we can afford a great deal of flexibility to researchers without ignoring their incentives. 

We model this interaction as a game between two players. The first player is known as the \textit{principal} (e.g., a regulator) and the second is called the \textit{agent} (e.g., a pharmaceutical company). 
The agent wishes to make money by selling a product, but must seek regulatory approval by the principal. 
The principal wishes to ensure that products on the market are legitimate.
We assume that the agent knows the quality of their product, denoted $\theta \in \Theta$ (referred to as a \emph{type} in the economics literature). Larger values indicate higher quality, with zero representing no improvement relative to a baseline. 
The principal, on the other hand, has no prior information about the quality~$\theta$.
To grant approval, the principal requires that the agent provide evidence for its product having sufficient quality; e.g., by conducting a randomized controlled trial (RCT) and testing a null hypothesis, $H_0 : \theta \le \theta_0$. The agent chooses whether to sponsor such a trial to seek regulatory approval based on their private information. 
In this work, the agent and the principal will enter into a contract that caps the reward the agent can receive as a function of the statistical evidence for the quality of the product. We explore how different contracts change the incentive landscape of the agent, and develop optimal contracts in this setting. 

\subsection{Incentives affect inference: a simple example}
\label{sec:fda_simple}
\label{sec:stories}

We begin with a stylized example to highlight the interaction between an agent's incentives and the principal's statistical protocol. Suppose there are two types of pharmaceutical companies: companies with ineffective drugs ($\theta = 0$) and companies with effective drugs ($\theta = 1$). Further, assume that the company knows its type, while the regulator does not. The company may choose to pay \$10 million to run a clinical trial, which results in a statistical test for the null hypothesis that the drug is ineffective. Suppose that the test is carried out so that it has 5\% type-I error and 80\% power to reject when $\theta = 1$; see below.

\begin{center}
    \begin{tabular}{|l||c|c|}
\hline
type & P(approve) & P(non-approve)  \\
\hline
\hline
    $\theta = 0$ & 0.05 & 0.95  \\
    \hline
    $\theta = 1$ & 0.80 & 0.20 \\ 
\hline
\end{tabular}
\end{center}

\vspace{.2cm}

Is this a good statistical protocol? The answer depends on how much money the pharmaceutical company will make, among other things. In particular, depending on the total profit the company earns when they are approved, even companies with ineffective drugs may be incentivized to run a trial.
\begin{itemize}
    \item \textbf{Case 1: small profit.} Suppose that companies who receive approval make \$100 million in profit, 10 times their investment. In this case, agents of type $\theta = 0$ would choose \emph{not} to run trials, since their expected profit for running a trial is -\$5 million. Hence, all approved drugs would be effective drugs.
    \item \textbf{Case 2: large profit.} Suppose that companies who receive approval make \$1 billion in profit, 100 times their investment. In this case, agents of type $\theta = 0$ \emph{would} choose to run trials: their expected profit from seeking approval is \$40 million. On average, 5\% of such agents would receive approval, so many of the approved drugs may actually be ineffective. For example, if there were 20 times as many ineffective drugs as effective drugs, a quick calculation shows that 56\% of approved drugs would be ineffective ones.
\end{itemize}
Clearly, the effectiveness of a statistical protocol depends on the incentives of the agents.

Conversely, the statistical protocol changes the incentives of the agents. Consider again the large profit case above, where agents receive 100 times their initial investment if they receive approval. Now, however, suppose the principal changes to a stricter protocol such that the probability of approval is only $0.005$ when $\theta = 0$. With this more stringent standard, the agents of type $\theta = 0$ are no longer incentivized to submit their drug for approval; they would lose money, on average, if they chose to run clinical trials for drugs they know to be ineffective. Thus, changing the statistical protocol will affect how the agents are incentivized to act.

In this work, we show how to design statistical protocols---which we term \emph{statistical contracts}---that account for this interplay. 

\subsection{Related work}
\label{sec:related_work}

Contract theory is the study of incentives when parties transact in the presence of private information; see~\citet{laffont2001, bolton2004contract}, and~\citet{salanie2005economics} for overviews. This is an extensive literature adjacent to information design~\citep[e.g.,][]{kamenica2011bayesian} and auction theory~\citep[e.g.,][]{hurwicz2006designing}, and we review only a narrow slice most related to statistical inference and computation. In particular, tools from algorithmic game theory have recently been applied to study contract theory; for example, \cite{babaioff2006combinatorial}~study the problem of incentivizing a collection of agents to perform a task requiring coordination among the agents. 
More recently, \citet{duetting2020complexity, guruganesh2021contracts}, and~\citet{castiglioni2022bayesian} show that finding optimal contracts is computationally hard. In parallel, several investigations have addressed how well simple contracts can approximate the optimal utility; \citet{dutting2019simple} show that the best linear contract is within a constant fraction
of the optimal contract, and \citet{carroll2015robustness} present complementary results on the robustness of linear contracts. Similarly,~\citet{alon2021contracts} characterize the set of implementable rules in a setting with both private actions and private information, showing that efficient algorithms are
possible in some cases. These works are all primarily computational; thus far little work has addressed the questions of learning or statistical inference in contract theory.  Applying contract theory to statistical inference, \citet{schorfheide2012use},~\citet{schorfheide2016hold}, and~\citet{spiess2018optimal} study the problem of delegating statistical estimation to a strategic agent, and~\cite{frazier2014incentivizing} consider the problem of incentivizing learning agents in a bandit setting. In the reverse direction, applying learning and inference to contract theory,~\cite{ke2009essays, ho2016adaptive} study the estimation and inference of an unknown distribution of agent types when a principal transacts with many agents from the distribution.

In view of our motivating example in Section~\ref{sec:stories}, we note that~\cite{min2023screening} studies a contract theory model for the FDA, where firms of different sizes choose to finance cheap or expensive trials. Our work focuses instead on how the regulator should conduct incentive-aware statistical inference in such regulator-firm relationships. In another direction, \cite{chassang2012selective} and \cite{ditillio2017} consider a model for randomized trials where an agent has a hidden action that affects the outcome. We note that in our model, there is no hidden action affecting the outcome.

Situating our work within contract theory, we study an \emph{adverse selection} model with a \emph{common value} structure in the principal's utility function. Our key departure from the usual adverse selection setup is that we do not assume that the principal has a prior distribution about the agents' hidden types. This makes statistical inference challenging. Our focus is on designing contracts that allow the principal to carry out statistical inference in order to properly assess the hidden types of the agents. Importantly, inference must be done in a way that takes into account the incentives of the strategic agents. Our work most closely relates to that of~\cite{tetenov2016economic}, who consider setting the type-I error level of a hypothesis test to account for an agent's payoffs. That work establishes a minimax protocol similar to our work for the case where the principal's action space involves setting the $p$-value threshold for approval, and it also analyzes the incentive structure of Phase III clinical trials for drug approval. See also~\citet{viviano2021when}, who study multiple hypothesis testing with incentives.

To carry out statistical inference in our economic setting, we use a statistical device called \emph{$e$-values}. An $e$-value for a null hypothesis $H_0$ is a nonnegative random variable $E$ such that $\mathbb{E}\left[ E \right] \le 1$ when~$H_0$ is true. Larger~$e$-values are then regarded as stronger evidence against the null. $e$-values are a broad class of model-comparison statistics which include Bayes factors and likelihood ratios. In contrast to more traditional $p$-values, $e$-values have a natural interpretation as a betting score, placing them at the interface of game theory and statistics. 
\cite{shafer2021testing} advocates for betting terminology as an effective framework for communicating results to a broad audience.
The betting analogy highlights another key benefit of game-theoretic statistics:
bets can be made sequentially, so evidence can be accumulated gradually within an experiment or aggregated across studies.
See~\cite{grunwald2020safe},~\citet{vovk2021values}, and~\citet{ramdas2023game} for more background on $e$-values. A key contribution of our work is the recognition that $e$-values are measures of statistical evidence that are well-suited for mechanism design.

\section{Incentive-aligned hypothesis testing}

\subsection{Setting}
\label{subsec:statistical-contracts-setup}
We assume the existence of a quality parameter, $\theta \in \Theta$, which is known to the agent. The principal does not know $\theta$, and has no information about the distribution
of values of $\theta$.

The principal moves first, offering a statistical contract of the following form:

\begin{center}
\fbox{
    \begin{minipage}{6.in}
    \vspace{.5em}
    \textbf{Statistical contract.} The agent follows these steps: 
        \begin{enumerate}[topsep=0pt,itemsep=-1ex,partopsep=1ex,parsep=1ex]
            \item Choose a license function $f$ from a \emph{menu} of options $\mathcal{F}$.
            \item Incur a cost $\cost$ to run a trial.
            \item Produce \emph{evidence} $Z$ from the trial, where $Z \sim P_\theta$.
            \item Receive a \emph{license} to make a profit at most $L = f\left( Z \right) \in [ 0, \infty )$.\vspace{.5em}
        \end{enumerate}
    \end{minipage}}
\end{center}

Based on their private information~$\theta$, cost~$\cost$ and menu of options~$\mathcal{F}$, the agent decides whether to opt in to the statistical contract. We let~$I\in\left\{ 0,1 \right\}$ denote the indicator that the agent opts in.

The principal receives utility $u\left( \theta, L \right)$
based on the true state~$\theta$ and the payoffs, normalized such that a utility of zero
corresponds to the principal's utility when the statistical contract is declined. 
The agent's profit from the contract is $L - \cost$ (or zero if they opt out), and we model the agent as seeking to maximize this profit; see below.
Our focus in this section is to design menus $\mathcal{F}$ that result in high utility for the principal. Formally, we assume that there is a space of possible license functions $\bar{\mathcal{F}}$, and the principal chooses a menu $\mathcal{F} \subset \bar{\mathcal{F}}$.

The principal utilizes a hypothesis testing framework.  We suppose there is a \emph{null} subset of types $\Theta_0 \subset \Theta$ and an \emph{nonnull} subset of types $\Theta_1 = \Theta \setminus \Theta_0$. The principal wishes to detect nonnull agents and transact with them, while avoiding null agents. We formalize this by taking the principal's utility function $u\left( \theta, L \right)$ to be nonnegative and nondecreasing in $L$ when $\theta \in \Theta_1$ but nonpositive and nonincreasing in $L$ when $\theta \in \Theta_0$. 
When the agent opts out, the principal's utility is zero. Putting this together, for a distribution $Q$ over agent types $\Theta$, the principal's expected utility is
\begin{equation}
\label{eq:principal_utility}
\E_{\theta \sim Q} \left[\E_{Z \sim P_\theta}\left[u\left( \theta, L \right) \cdot I \mid \theta\right]\right].
\end{equation}
Importantly, our conclusions about good statistical contracts do not depend on the precise specification of $u\left( \cdot, \cdot \right)$, other than the shape restrictions above. That is, we do not need to know the precise utility function $u\left( \cdot, \cdot \right)$ in practice in order to identify a good menu $\mathcal{F}$. Similarly, we do not assume that $Q$ is known to the principal when they design the statistical contract; instead, we derive statistical contracts that have high utility simultaneously for many distributions $Q$.

Turning to the agent, we model the agent as acting to optimize their profit license. That is, they respond to the principal's offer by choosing an element of $\mathcal{F}$ that results in maximum expected profit:
\begin{equation}
\label{eq:best_response_def}
    f^{\text{br}}\left(  \ \cdot \ ; \theta, \mathcal{F} \right) = \arg\max_{f \in \mathcal{F}} \left\{ \mathbb{E}_{Z \sim P_\theta}\left[f\left( Z \right)\right]\right\}.
\end{equation}
To elaborate on this notation, $f^{\text{br}}$ is any element of $\mathcal{F}$ that maximizes the expected profit for an agent with type $\theta$, so it depends on the underlying type $\theta$ and menu $\mathcal{F}$. 
If no license function has positive expected value, the agent opts out.
For technical ease, we assume that the functions $f \in \bar{\mathcal{F}}$ have finite expectation for all $\theta \in \Theta$ and that $\mathcal{F}, \bar{\mathcal{F}}$ are closed so that maxima such as in~\eqref{eq:best_response_def} exist and are attained. In practice, agents operate with a nonlinear utility function, instead of directly maximizing expected profit as in~\eqref{eq:best_response_def}. We will see that treating the agent's utility as linear is conservative in the case where the principal does not know the agent's utility function---see Remark~\ref{rmk:utility} below.

\begin{remark}[Relationship with Stackelberg equilibria]
With the above assumption, the principal's task of designing the optimal menu is related to the task of finding a Stackelberg equilibrium; the principal moves first and proposes a statistical contract, and then the agent best responds. However, our agent has hidden information (their type~$\theta$), and the principal's utility depends on this hidden information. Had we assumed a prior distribution over the hidden information, the statistical contract design problem would be to find a Stackelberg equilibrium in a Bayesian game. In this work, we do not assume knowledge of such a prior, and instead find a maximin optimal menu by protecting against the worst-case prior.
\end{remark}

\begin{remark}[Relationship with standard contract theory]
This setup is an example of \textit{adverse selection}, but---unlike most work---we do not assume that the principal has prior information about the distribution of the agent type $\theta$. Thus, the principal will need to employ statistical inference to take reasonable actions in the face of uncertainty. See Section~\ref{sec:related_work} for more discussion of how this relates to other results in contract theory.
\end{remark}

\subsection{Incentive-aligned statistical contracts}
Null values of $\theta$ correspond to agents
that do not meet the principal's quality requirements; 
the principal does not wish to offer a large
profit license to such agents. 
As we seek to design menus $\mathcal{F}$ with high utility, we will be particularly interested in the contracts 
such that null products are not profitable:
\begin{definition}[Incentive-aligned contract]
\label{def:incentive-align}
A statistical contract with menu $\mathcal{F}$ and cost~$\cost$ is \emph{incentive-aligned} if for all $\theta \in \Theta_0$ and all~$f\in \mathcal{F}$, $\E_\theta\left[ f\left( Z \right) \right] \le \cost$.
\end{definition}
An incentive-aligned statistical contract aligns the interests of the principal and agent so that agents
do not make a profit from ineffective products. 
A second interpretation is that such  statistical contracts deter bluffing agents; 
an agent that had certainty that their product was ineffective would 
choose to decline the statistical contract. We emphasize that the incentive-aligned property may not hold for naive regulatory mechanisms; see our previous example in Section~\ref{sec:fda_simple}. 

\begin{remark}[Nonlinear agent utility]\label{rmk:utility}
    Suppose that instead of employingn the best-response function~\eqref{eq:best_response_def}, the agent maximizes their expected utility as a function of profit
    \begin{equation}\label{eq:best_response_nonlinear}
    f^{\text{br}}\left(  \ \cdot \ ; \theta, \mathcal{F} \right) = \arg\max_{f \in \mathcal{F}} \left\{ \mathbb{E}_{Z \sim P_\theta}\left[\nu\left( f\left( Z \right) - \cost \right)\right]\right\},
    \end{equation}
    where $\nu : \R \to \R$ is a nondecreasing, concave utility function, normalized so that $\nu\left( 0 \right) = 0$. The agent opts out if no license function has positive expected utility. If a statistical contract with menu~$\mathcal{F}$ and cost~$\cost$ is incentive-aligned in the sense of Definition~\ref{def:incentive-align}, then by Jensen's inequality we have $\mathbb{E}_{Z \sim P_\theta}\left[\nu\left( f\left( Z \right) - \cost \right)\right] \le 0$ for all $\theta\in \Theta_0$. This shows null agents with concave utility will opt out, but it is possible that some nonnull agents opt out as well.
\end{remark}

We now characterize the set of incentive-aligned statistical contracts in terms of $e$-values, which are a tool from frequentist hypothesis testing.
\begin{definition}[$e$-value]
\label{def:evalue}
Consider a random variable $Z \in \mathcal{Z}$. A function $g : \mathcal{Z} \to \mathbb{R}_{\ge 0}$ is an \emph{$e$-value} for the null hypothesis $\{\theta \in \Theta_0\}$ if for all $\theta_0 \in \Theta_0$ we have
\begin{equation*}
    \mathbb{E}_{Z \sim P_{\theta_0}} \left[ g\left( Z \right) \right] \le 1.
\end{equation*}
\end{definition}
$e$-values are a measure of evidence against the null hypothesis---large $e$-values indicate more evidence against the null hypothesis. The reader should think of $e$-values as an alternative to $p$-values: both measure strength of evidence against the null hypothesis, but on a different scale. $e$-values can be converted to $p$-values and vice versa; see~\cite{vovk2021values}. We will let $\mathcal{E}$ denote the set of $e$-values.

The set of $e$-values is exactly the set of possible license functions in an incentive-aligned statistical contract after rescaling; this is a straightforward consequence of Definitions~\ref{def:incentive-align} and~\ref{def:evalue} above. We record this formally below.
\begin{prop}[Characterization of incentive-aligned statistical contracts]
\label{prop:$e$-values-ia-contract}
A menu $\mathcal{F}$ yields an incentive-aligned statistical contract if and only if $\mathcal{F} / \cost \subset \mathcal{E}$, where
$\mathcal{F} / \cost = \{f / \cost: f \in \mathcal{F}\}$ denotes the set of all functions in $\mathcal{F}$ scaled by $\cost^{-1}$.
\end{prop}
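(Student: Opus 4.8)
The plan is to unwind the two definitions and observe that the incentive-alignment condition and the membership $\mathcal{F}/\cost \subset \mathcal{E}$ are literally the same statement once both sides are divided by the cost. I will use throughout that $\cost > 0$ (a trial has strictly positive cost, so dividing by $\cost$ is harmless), that every $f \in \bar{\mathcal{F}}$ maps into $[0,\infty)$ — which matches the nonnegativity requirement in Definition~\ref{def:evalue} — and that each $f$ has finite expectation under every $P_\theta$ (assumed in the setup), so the expectations below are well defined and linearity of expectation applies.

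For the forward direction, I would assume the contract with menu $\mathcal{F}$ is incentive-aligned, fix an arbitrary $f \in \mathcal{F}$, and show $f/\cost \in \mathcal{E}$. Nonnegativity of $f/\cost$ is immediate. For any $\theta_0 \in \Theta_0$, Definition~\ref{def:incentive-align} gives $\E_{\theta_0}\left[ f(Z) \right] \le \cost$; dividing by $\cost$ and pulling the constant inside the expectation gives $\E_{Z \sim P_{\theta_0}}\left[ f(Z)/\cost \right] \le 1$. Since $\theta_0 \in \Theta_0$ was arbitrary, $f/\cost$ meets Definition~\ref{def:evalue}, hence $\mathcal{F}/\cost \subset \mathcal{E}$.

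For the converse, I would assume $\mathcal{F}/\cost \subset \mathcal{E}$, fix $f \in \mathcal{F}$ and $\theta \in \Theta_0$, and run the same computation backwards: $f/\cost \in \mathcal{E}$ gives $\E_{Z \sim P_\theta}\left[ f(Z)/\cost \right] \le 1$, and multiplying by $\cost$ yields $\E_\theta\left[ f(Z) \right] \le \cost$, which is exactly the incentive-alignment inequality for this $f$ and $\theta$. Taking $f$ and $\theta$ arbitrary closes the argument.

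Honestly, there is no real obstacle here: the proposition is a bookkeeping restatement of Definitions~\ref{def:incentive-align} and~\ref{def:evalue}, and the only points worth flagging explicitly are the standing assumptions ($\cost > 0$, nonnegativity and finite expectation of license functions) that make the rescaling and the use of linearity of expectation legitimate. If anything, the "hard part" is purely expository — making clear why this simple equivalence is the right bridge, namely that it lets every later result about $e$-values be imported wholesale into the contract-design problem.
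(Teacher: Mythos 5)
Your proof is correct and takes essentially the same approach as the paper's: both directions amount to dividing the incentive-alignment inequality $\E_{\theta_0}\left[ f\left( Z \right) \right] \le \cost$ by $\cost$ to match the $e$-value condition $\E_{\theta_0}\left[ f\left( Z \right)/\cost \right] \le 1$, plus noting nonnegativity. Your version is if anything slightly more careful in flagging the standing assumptions ($\cost>0$, finite expectations) that make the rescaling legitimate.
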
 

Thus, the license functions that yield incentive-aligned statistical contracts are in
one-to-one correspondence with $e$-values. This simple observation makes a key, explicit connection between agent incentives and statistical hypothesis testing. One useful consequence is that the designer can select among the many $e$-values that have been catalogued in the literature as convenient building blocks.

\subsection{Optimality of the incentive-aligned statistical contract}
Incentive-aligned statistical contracts maximize the principal's worst-case utility
in a sense that we now make precise. Note that when the principal is a benevolent regulator whose utility is social welfare, this means that the incentive-aligned statistical contract has the best worst-case social welfare.
The idea is that the
principal needs only to screen out agents who know that their product is ineffective, which the incentive-aligned statistical
contract does. Moreover, a larger menu that is incentive-aligned is more attractive to the agents, so to maximize participation the principal should offer the largest incentive-aligned menu possible. We now turn to the formal presentation.

\subsubsection{Notation and goal}
We consider the case in which there are an infinite number of agents, each with their own $\theta \in \Theta$. Formally, we have a distribution $Q$ over $\Theta$ that gives the frequency of different
types of agents.
Recall the setting from Section~\ref{subsec:statistical-contracts-setup}: each agent wishes to make a profit, and they have a choice whether to enter into the statistical contract offered by the principal. 
We suppose that the principal receives 
bounded positive utility for true positives: $a_1 \ge u\left( \theta_1, L \right) \ge 0$ for $L \ge 0$, and that $u\left( \theta_1, L \right)$ is nondecreasing in $L$ when $\theta_1 \in \Theta_1$.
On the other hand, the principal receives negative utility for false positives: 
$u\left( \theta_0, L \right) < 0$ for all $L > 0$, and $u\left( \theta_0, L \right)$ is nonincreasing in $L$ for $\theta_0 \in \Theta_0$.
The principal seeks to optimize her utility in~\eqref{eq:principal_utility} by selecting the best menu of contracts $\mathcal{F} \subset \bar{\mathcal{F}}$ that will be offered to the agent. 

\subsubsection{Maximin optimality}
The principal's expected utility depends on the distribution over agent types, $Q$, and the menu offered, $\mathcal{F}$. The principal controls the latter, but may not know the former. Thus, we will seek a menu that performs well for many distributions $Q$. Our main result in what follows is that a statistical contract is maximin optimal if and only if it is incentive-aligned. Thus, a principal who does not know the distribution of agent types should select an incentive-aligned statistical contract.

To formalize this point, we consider a notion of maximin optimality. Consider menus of contracts $\mathcal{F}$
that are contained in some maximal set of functions $\bar{\mathcal{F}}$. For example, $\bar{\mathcal{F}}$ 
could be all measurable functions from $\mathcal{Z}$ to $\left[ 0,\marketsize \right]$.
We say a menu $\mathcal{F}$ 
is \textit{maximin optimal} if
\begin{equation}
\label{eq:maximin}
    \min_{Q} \E\left[\E\left[u\left( \theta, f^{\text{br}}\left( Z; \theta, \mathcal{F} \right) \right) \mid \theta\right]\right] = \max_{\mathcal{F}' \subset \overline{F}} \min_{Q} \E\left[\E\left[u\left( \theta, f^{\text{br}}\left( Z; \theta, \mathcal{F}' \right) \right)\mid \theta\right]\right],
\end{equation}
where we recall that $f^{\text{br}}$ is the agent's best response (given his knowledge of $\theta$) when offered a certain menu.
In words, a maximin statistical contract is one such that when nature picks the worst distribution $Q$ of agent types for our menu, it does as well as possible. We are now ready to state the primary result of this section.
\begin{theorem}
\label{thm:maximin}
A menu $\mathcal{F}$ is maximin optimal if
and only if it is incentive-aligned.
\end{theorem}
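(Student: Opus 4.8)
The plan is to show that the value on the right-hand side of~\eqref{eq:maximin} equals $0$, and that a menu attains this value exactly when it is incentive-aligned. \emph{Step 1 (every menu has nonpositive worst case).} Fix any null type $\theta_0\in\Theta_0$ and evaluate the principal's objective at the point mass $Q=\delta_{\theta_0}$. If the agent opts out the contribution is $0$; if the agent opts in it equals $\E_{Z\sim P_{\theta_0}}\!\left[u\!\left(\theta_0,f^{\text{br}}(Z;\theta_0,\mathcal F')\right)\right]\le 0$, since $u(\theta_0,\cdot)$ is nonpositive. Hence $\min_Q(\cdots)\le 0$ for every menu $\mathcal F'\subset\bar{\mathcal F}$, so the maximin value is $\le 0$.

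\emph{Step 2 (incentive-aligned $\Rightarrow$ maximin optimal).} Let $\mathcal F$ be incentive-aligned, fix any $Q$, and split the expectation over $\Theta=\Theta_0\cup\Theta_1$. For $\theta\in\Theta_0$, Definition~\ref{def:incentive-align} gives $\E_\theta[f(Z)]\le\cost$ for all $f\in\mathcal F$, so no license function yields strictly positive expected profit and the agent opts out, contributing $0$. For $\theta\in\Theta_1$, the contribution is $0$ if the agent opts out and otherwise equals $\E_{Z\sim P_\theta}\!\left[u(\theta,f^{\text{br}}(Z))\right]\ge 0$ because $u(\theta,\cdot)\ge 0$ on $\Theta_1$. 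Summing, the principal's expected utility is $\ge 0$ for every $Q$, so $\min_Q(\cdots)\ge 0$. Since incentive-aligned menus exist (for instance the empty menu, or the zero-license menu $\{0\}$), Step 1 forces the maximin value to equal $0$; any incentive-aligned $\mathcal F$ then has $\min_Q(\cdots)=0$, i.e. is maximin optimal.

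\emph{Step 3 (not incentive-aligned $\Rightarrow$ not maximin optimal).} If $\mathcal F$ is not incentive-aligned, pick $\theta_0\in\Theta_0$ and $f\in\mathcal F$ with $\E_{\theta_0}[f(Z)]>\cost\ge 0$, and take $Q=\delta_{\theta_0}$. The best response has $\E_{\theta_0}[f^{\text{br}}(Z)]\ge\E_{\theta_0}[f(Z)]>\cost$, so the agent opts in; since $f^{\text{br}}\ge 0$ has strictly positive mean, $P_{\theta_0}(f^{\text{br}}(Z)>0)>0$. On that event $u(\theta_0,f^{\text{br}}(Z))<0$ strictly, and on its complement $u(\theta_0,0)\le 0$, so $\E_{Z\sim P_{\theta_0}}[u(\theta_0,f^{\text{br}}(Z))]<0=\max_{\mathcal F'}\min_Q(\cdots)$. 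Hence $\mathcal F$ is not maximin optimal, and combined with Step 2 this gives the equivalence.

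\emph{Main obstacle.} The delicate points concern the boundary behaviour of the agent's best response. Step 2 genuinely relies on the tie-breaking convention that a null agent declines when its best attainable expected profit is exactly $0$ (not only when it is negative): the weak inequality in Definition~\ref{def:incentive-align} allows a license with $\E_{\theta_0}[f(Z)]=\cost$ but $P_{\theta_0}(f(Z)>0)>0$, and if such an agent opted in the worst-case utility would be strictly negative rather than $0$. In Step 3 the work is upgrading ``best response has mean $>\cost\ge 0$'' to ``best response is positive with positive probability'' and then invoking the strict sign condition $u(\theta_0,L)<0$ for $L>0$ to obtain a strict inequality; one should also check that the negative part of $u(\theta_0,\cdot)$ is integrable in the bounded-utility regime considered (and if it is not, the relevant expectation is $-\infty<0$, which only strengthens the conclusion). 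Everything else is routine bookkeeping.
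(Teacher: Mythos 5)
Your proof is correct and follows essentially the same route as the paper's: both arguments pin the maximin value at zero by testing against a point mass on a null type, show incentive-aligned menus attain zero because null agents opt out while nonnull agents contribute nonnegatively, and show non-incentive-aligned menus fall strictly below zero under a distribution concentrated on the offending null type. The only differences are cosmetic --- you argue the converse directly at $Q=\delta_{\theta_0}$ rather than invoking the $\pi_0\to 1$ limit of Proposition~\ref{prop:only-ia-contracts-feasible}, and you are more explicit than the paper about the tie-breaking convention (a null agent with expected profit exactly zero opts out) and about upgrading $\E_{\theta_0}\left[ f\left( Z \right) \right]>\cost$ to $P_{\theta_0}\left( f^{\text{br}}\left( Z \right)>0 \right)>0$, both of which the paper uses implicitly.
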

This result together with Proposition~\ref{prop:$e$-values-ia-contract} says that a menu is maximin optimal if and only if each license function in the menu
is an $e$-value (rescaled by $\cost$).
This result has broad implications for the design of statistical protocols within contracts.
If we want our statistical contracts to perform well across many cases, they must be incentive-aligned. 

To build intuition for this result, consider the special case where $Q = \pi_0 \cdot \delta_{\theta_0} + \left( 1-\pi_0 \right)\cdot \delta_{\theta_1}$ is a mixture of $\theta_0$ and $\theta_1$ with mixing proportion $\pi_0 \in \left[ 0,1 \right]$, where $\theta_0 \in \Theta_0$ is a null type and $\theta_1 \in \Theta_1$ is a nonnull type.
We will consider the case where there are a large number of null agents by
analyzing the behavior of a menu $\mathcal{F}$ as $\pi_0 \to 1$.
Intuitively, for any population with a large fraction of nulls, if the nulls are not all
screened out, then the utility will decrease as $\pi_0 \to 1$; an increasing fraction of agents opting in to the statistical contract will be null. On the other hand,
if all nulls are screened out, then the utility will go to zero as $\pi_0 \to 1$, since
all null agents will decline the statistical contract. Formally, we have the following:

\begin{prop}[Only incentive-aligned statistical contracts work with many nulls]
\label{prop:only-ia-contracts-feasible}
In the setting above, a menu $\mathcal{F}$ yields $\lim\inf_{\pi_0\to 1} \E\left[ u\left( \theta, L \right) \right] \ge 0$ for all $\theta_0 \in \Theta_0$ if and only if $\mathcal{F}$ corresponds to an incentive-aligned statistical contract.
\end{prop}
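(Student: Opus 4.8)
The plan is to exploit that a menu $\mathcal{F}$ fixes, for each type, both the agent's best response $f^{\text{br}}\left(\,\cdot\,;\theta,\mathcal{F}\right)$ and the opt-in indicator $I$ \emph{independently of the prior} $Q$, and hence independently of $\pi_0$. Substituting $Q = \pi_0\,\delta_{\theta_0} + \left(1-\pi_0\right)\delta_{\theta_1}$ into~\eqref{eq:principal_utility}, the principal's expected utility is therefore affine in $\pi_0$,
\[
\E\left[u\left(\theta,L\right)\right] \;=\; \pi_0\,A\left(\theta_0\right) + \left(1-\pi_0\right) B\left(\theta_1\right),
\]
where $A\left(\theta_0\right) = \E_{\theta_0}\left[u\left(\theta_0,\,f^{\text{br}}\left(Z;\theta_0,\mathcal{F}\right)\right)\cdot I\right]$ and $B\left(\theta_1\right)$ is the analogous quantity evaluated at $\theta_1$. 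Since $u\left(\theta_1,\cdot\right)\in[0,a_1]$ we have $0\le B\left(\theta_1\right)\le a_1<\infty$, so letting $\pi_0\to 1$ yields $\liminf_{\pi_0\to 1}\E\left[u\left(\theta,L\right)\right] = A\left(\theta_0\right)$ (with the convention that this is $-\infty$ when $A\left(\theta_0\right)=-\infty$; note $A\left(\theta_0\right)\le 0$ always, so the expectation is well defined in $[-\infty,0]$). Thus it suffices to prove: $A\left(\theta_0\right)\ge 0$ for every $\theta_0\in\Theta_0$ if and only if the contract is incentive-aligned, i.e.\ $\mathcal{F}/\cost\subset\mathcal{E}$ by Proposition~\ref{prop:$e$-values-ia-contract}.

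For the ``if'' direction, assume incentive alignment, so $\E_{\theta_0}\left[f\left(Z\right)\right]\le\cost$ for every $f\in\mathcal{F}$ and every $\theta_0\in\Theta_0$. Then $\max_{f\in\mathcal{F}}\E_{\theta_0}\left[f\left(Z\right)\right]-\cost\le 0$, so no license function yields positive expected profit and a type-$\theta_0$ agent opts out; hence $I=0$ almost surely under $P_{\theta_0}$ and $A\left(\theta_0\right)=0$. For the ``only if'' direction I would argue by contraposition: if $\mathcal{F}$ is not incentive-aligned, fix $\theta_0\in\Theta_0$ and $f\in\mathcal{F}$ with $\E_{\theta_0}\left[f\left(Z\right)\right]>\cost\ge 0$. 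Then the type-$\theta_0$ agent strictly prefers to opt in ($I=1$), and every best response satisfies $\E_{\theta_0}\left[f^{\text{br}}\left(Z\right)\right]\ge\E_{\theta_0}\left[f\left(Z\right)\right]>0$. Since license functions are nonnegative, a nonnegative random variable with positive mean must place positive mass away from zero, i.e.\ $P_{\theta_0}\left(f^{\text{br}}\left(Z\right)\ge\varepsilon\right)>0$ for some $\varepsilon>0$; on that event $u\left(\theta_0,f^{\text{br}}\left(Z\right)\right)\le u\left(\theta_0,\varepsilon\right)<0$ by the monotonicity of $u\left(\theta_0,\cdot\right)$ and the false-positive sign condition, while $u\left(\theta_0,\cdot\right)\le 0$ everywhere. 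Hence $A\left(\theta_0\right)\le u\left(\theta_0,\varepsilon\right)\,P_{\theta_0}\left(f^{\text{br}}\left(Z\right)\ge\varepsilon\right)<0$, so $\liminf_{\pi_0\to 1}\E\left[u\left(\theta,L\right)\right]<0$, which is the desired contrapositive.

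The argument is short because the structural observation — that $f^{\text{br}}$ and $I$ are prior-independent, so the principal's utility is affine in $\pi_0$ with limiting value exactly the ``all-null'' term $A\left(\theta_0\right)$ — does essentially all the work. The one step that requires care is the last one: converting ``positive expected license value under a null type'' into ``strictly negative principal utility.'' This cannot be done pointwise (the license could put tiny mass near every value), but only via a positive-probability set $\{f^{\text{br}}\left(Z\right)\ge\varepsilon\}$, whose existence is forced by the fact that a nonnegative random variable with positive expectation is not supported at $0$; this also transparently covers the case where $u\left(\theta_0,\cdot\right)$ is unbounded below, since then $A\left(\theta_0\right)=-\infty$ and the $\liminf$ is $-\infty$ as well. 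A secondary, minor point is that best responses need not be unique; but all maximizers share the same expectation under $P_{\theta_0}$, so the ``only if'' argument is insensitive to the choice, and in the incentive-aligned case the agent opts out and the choice is moot.
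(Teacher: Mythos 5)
Your proof is correct and follows essentially the same route as the paper's: decompose the expected utility as a $\pi_0$-mixture of the null and nonnull terms, note the nonnull term is bounded so the limit is the all-null term, and show that term is zero under incentive alignment and strictly negative otherwise. You are in fact slightly more careful than the paper on two points---evaluating the null term at the agent's best response rather than at the particular violating $f$, and making the passage from positive expected license to strictly negative utility explicit via a positive-probability level set $\left\{ f^{\text{br}}\left( Z \right) \ge \varepsilon \right\}$---but these are refinements of the same argument, not a different one.
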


We thus see that if there are potentially many null agents who would take advantage of a poorly formulated statistical contract, the principal must select an incentive-aligned statistical contract or be exposed to negative utility. This idea at is the heart of Theorem~\ref{thm:maximin}.

\subsubsection{The best menu of incentive-aligned contracts}
We have shown that all incentive-aligned statistical contracts are maximin optimal. This result is coarse, however,
since there are many such menus $\mathcal{F}$ that result in an incentive-aligned statistical contract.
Which menu is best? We address this next. Our main result says
if the utility function $u\left( \theta, L \right)$ is linear in $L$, then
the menu of all license functions that are incentive-aligned is optimal. 

\begin{theorem}[Optimality of the menu of all $e$-values under linear utility]
\label{prop:opt_linear_util}
Let $\mathcal{F} = \cost \cdot \mathcal{E} = \{f : E_{Z \sim P_{\theta_0}}\left[ f\left( Z \right) \right] \le \cost \text{ for all } \theta_0 \in \Theta_0\}$, and let $\mathcal{F}'$ be any other menu such that the resulting statistical
contract is incentive-aligned. Assume $u\left( \theta_1, L \right)$ is affine in $L$ for $\theta_1 \in \Theta_1$, and let $Q$ be any distribution over $\Theta$.
Then, the menu $\mathcal{F}$ has higher utility:
\begin{equation*}
\E_{\theta \sim Q}\left[ \E\left[ u\left( \theta, f^{\text{br}}\left( Z; \theta, \mathcal{F} \right) \right) \mid \theta \right] \right] \ge \E_{\theta \sim Q}\left[ \E\left[ u\left( \theta, f^{\text{br}}\left( Z; \theta, \mathcal{F}' \right) \right) \mid \theta \right] \right].
\end{equation*}
\end{theorem}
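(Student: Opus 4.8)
The plan is to reduce everything to the observation that $\mathcal F'$ is a \emph{sub-menu} of $\mathcal F$. Indeed, by Proposition~\ref{prop:$e$-values-ia-contract}, $\mathcal F'$ yields an incentive-aligned contract if and only if $\mathcal F'/\cost \subseteq \mathcal E$, i.e.\ $\mathcal F' \subseteq \cost\cdot\mathcal E = \mathcal F$; so the theorem amounts to the monotone claim that enlarging an incentive-aligned menu never hurts the principal. I would prove the inequality pointwise in $\theta$ and then integrate against $Q$ (this also shows the conclusion holds for every $Q$ simultaneously). Throughout, a ``best response'' is understood together with the opt-out option: if every $f$ in the offered menu has $\E_\theta[f(Z)]\le\cost$, the agent opts out and the principal receives utility $0$, consistent with~\eqref{eq:principal_utility}. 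The single elementary fact I will use repeatedly is that the best-response expected license is monotone in the menu: since $\mathcal F'\subseteq\mathcal F$,
\[
\E_\theta\!\left[f^{\text{br}}(Z;\theta,\mathcal F)\right] \;=\; \max_{f\in\mathcal F}\E_\theta[f(Z)] \;\ge\; \max_{f\in\mathcal F'}\E_\theta[f(Z)] \;=\; \E_\theta\!\left[f^{\text{br}}(Z;\theta,\mathcal F')\right],
\]
and, in particular, an agent who opts out under $\mathcal F$ also opts out under $\mathcal F'$.

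Next I would clear away the easy types. For a null type $\theta_0\in\Theta_0$: incentive-alignment of $\mathcal F$ gives $\E_{\theta_0}[f(Z)]\le\cost$ for every $f\in\mathcal F$, so the agent opts out under $\mathcal F$, hence also under $\mathcal F'$, and both sides of the target inequality equal $0$. For a nonnull type $\theta_1\in\Theta_1$ there are two further easy cases: if the agent opts out under $\mathcal F$ it opts out under $\mathcal F'$, so both sides are $0$; if the agent opts in under $\mathcal F$ but out under $\mathcal F'$, the right side is $0$ while the left side is $\E_{\theta_1}\!\left[u(\theta_1, f^{\text{br}}(Z;\theta_1,\mathcal F))\right]\ge 0$ because $u(\theta_1,L)\ge 0$ for all $L\ge 0$. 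The only substantive case is a nonnull agent who opts in under \emph{both} menus.

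In that case I would invoke affineness. Write $u(\theta_1,L)=\alpha(\theta_1)+\beta(\theta_1)L$; since $u(\theta_1,\cdot)$ is nondecreasing, $\beta(\theta_1)\ge0$. The principal's per-type utility then depends on the agent's (random) license $f^{\text{br}}(Z;\theta_1,\cdot)$ only through its mean, so
\[
\E_{\theta_1}\!\left[u\!\left(\theta_1, f^{\text{br}}(Z;\theta_1,\mathcal F)\right)\right] = \alpha(\theta_1)+\beta(\theta_1)\max_{f\in\mathcal F}\E_{\theta_1}[f(Z)],
\]
and the same identity with $\mathcal F$ replaced by $\mathcal F'$; since $\mathcal F'\subseteq\mathcal F$ and $\beta(\theta_1)\ge0$, the first quantity dominates the second. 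Taking $\E_{\theta\sim Q}$ over all the cases completes the argument. The place where the hypothesis is genuinely needed---and the one real obstacle---is exactly this final step: the agent optimizes, and the principal is exposed to, only the \emph{expectation} of the license, never its shape, so affineness of $u(\theta_1,\cdot)$ is precisely what promotes ``larger menu $\Rightarrow$ larger expected license'' to ``larger menu $\Rightarrow$ larger expected utility''. For a strictly concave $u(\theta_1,\cdot)$ this breaks: the mean-maximizing $e$-value in the full menu may be heavy-tailed (think of a likelihood ratio), while a restricted menu offering a more concentrated license could give strictly higher expected utility, so the menu of all $e$-values need not be optimal. A minor bookkeeping point, already folded into the case analysis, is keeping the opt-in/opt-out boundary consistent; note also that under affine $u(\theta_1,\cdot)$ the principal's utility does not depend on which maximizer the agent picks when the $\arg\max$ in~\eqref{eq:best_response_def} is non-unique, so no tie-breaking convention is required.
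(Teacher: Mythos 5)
Your proof is correct and follows essentially the same route as the paper's: reduce to $\mathcal F'\subseteq\mathcal F$ via Proposition~\ref{prop:$e$-values-ia-contract}, use incentive-alignment to discard null types, and use affineness so that the principal's per-type utility depends only on the expected license, which is monotone in the menu. Your version is in fact slightly more careful than the paper's, since you explicitly handle the opt-in/opt-out boundary for nonnull agents and note that the slope $\beta(\theta_1)\ge 0$ is needed for the final monotonicity step, both of which the paper's proof leaves implicit.
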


Intuitively, this result arises from the fact that for incentive-aligned statistical contracts only nonnull agents are active. Nonnull agents maximizing their license value are also maximizing the principal's utility when it is linear in the license value. Therefore, the principal should give the agents as much room to optimize their license value as possible; i.e., the principal should give the agent the maximal menu possible---the menu of all rescaled $e$-values.

A utility function $u\left( \theta, L \right)$ that is linear in $L$ is natural in some cases. In our FDA example, $L$ is the profit the sponsor is able to earn from the drug's sale, which should increase according to the drug's effectiveness and scale of distribution. The public health benefit created by a drug may thus scale roughly linearly with profitability, and hence the utility function can be modeled as linear. In this case, the preceding result shows that the menu of all rescaled $e$-values is optimal, in a strong sense.

To shed light on the case in which $u\left( \theta, L \right)$ is not linear in $L$, we make the following observation. The menu of all rescaled $e$-values is the largest menu resulting in an incentive-aligned statistical contract, by Proposition~\ref{prop:$e$-values-ia-contract}. It then follows that, among incentive-aligned statistical contracts, it results in the largest number of nonnull agents choosing to enter the statistical contract, and it results in the highest expected license value for the agents. (Because both agent participation and expected license value increase for larger menus.) We record this formally next.

\begin{prop}
\label{prop:max_participation}
Let $\mathcal{F} = \cost \cdot \mathcal{E}$, and let $\mathcal{F}'$ be any other menu such that the resulting statistical
contract is incentive-aligned. Let $Q$ be any distribution over $\Theta$.
Then, the menu $\mathcal{F}$ results in fewer agents opting out
and a larger expected market size:
\begin{equation*}
\E_{\theta \sim Q}\left[ \E\left[ f^{\textnormal{br}}\left( \cdot ; \theta, \mathcal{F} \right) \mid \theta \right] \right] \ge \E_{\theta \sim Q}\left[ \E\left[f^{\textnormal{br}}\left( \cdot ; \theta, \mathcal{F}' \right) \mid \theta \right] \right].
\end{equation*}
\end{prop}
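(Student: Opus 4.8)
The plan is to reduce the statement to the single observation that a profit‑maximizing agent is only helped by being offered a larger menu, together with the fact that ``incentive‑aligned'' is exactly the constraint of lying inside the menu $\mathcal{F} = \cost \cdot \mathcal{E}$. First I would invoke Proposition~\ref{prop:$e$-values-ia-contract}: since the contract with menu $\mathcal{F}'$ is incentive‑aligned, $\mathcal{F}'/\cost \subset \mathcal{E}$, i.e.\ $\mathcal{F}' \subseteq \cost \cdot \mathcal{E} = \mathcal{F}$. So it suffices to prove the monotonicity claim that for \emph{nested} menus $\mathcal{F}' \subseteq \mathcal{F}$ and \emph{every} fixed type $\theta$, both (a) the agent's participation and (b) the expected market size are weakly larger under $\mathcal{F}$ than under $\mathcal{F}'$; the two displayed conclusions then follow by taking $\E_{\theta \sim Q}$ and using monotonicity of the expectation. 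Throughout, the relevant maxima and expectations are well‑defined and finite by the standing assumptions that every $f \in \bar{\mathcal{F}}$ has finite expectation under each $P_\theta$ and that $\mathcal{F}, \bar{\mathcal{F}}$ are closed.

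For the per‑type argument, I would fix $\theta$ and write $V(\theta,\mathcal{G}) := \max_{f \in \mathcal{G}} \E_{Z \sim P_\theta}[f(Z)]$ for the best expected license attainable under a menu $\mathcal{G}$, so that $\mathcal{F}' \subseteq \mathcal{F}$ gives $V(\theta,\mathcal{F}') \le V(\theta,\mathcal{F})$ immediately. Recalling the best‑response rule~\eqref{eq:best_response_def} and the opt‑out convention, a type‑$\theta$ agent opts in under $\mathcal{G}$ precisely when $V(\theta,\mathcal{G}) > \cost$. Hence the set of types opting in under $\mathcal{F}'$ is a subset of the set opting in under $\mathcal{F}$, and integrating the participation indicators against $Q$ proves claim (a): fewer agents opt out under $\mathcal{F}$.

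For claim (b), I would write the expected market size contributed by a type‑$\theta$ agent under $\mathcal{G}$ as $M(\theta,\mathcal{G}) := \ind{V(\theta,\mathcal{G}) > \cost}\, V(\theta,\mathcal{G})$, which equals $\E[f^{\textnormal{br}}(\cdot;\theta,\mathcal{G}) \mid \theta]$ when the agent opts in and $0$ when they opt out. The pointwise bound $M(\theta,\mathcal{F}) \ge M(\theta,\mathcal{F}')$ splits into two cases: if $V(\theta,\mathcal{F}') > \cost$ then also $V(\theta,\mathcal{F}) \ge V(\theta,\mathcal{F}') > \cost$, so $M(\theta,\mathcal{F}) = V(\theta,\mathcal{F}) \ge V(\theta,\mathcal{F}') = M(\theta,\mathcal{F}')$; and if $V(\theta,\mathcal{F}') \le \cost$ then $M(\theta,\mathcal{F}') = 0 \le M(\theta,\mathcal{F})$ since $M \ge 0$. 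Taking $\E_{\theta \sim Q}$ of both sides gives the displayed inequality.

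I do not anticipate a substantive obstacle: this is essentially ``a max over a bigger set is bigger,'' dressed up with the $e$‑value characterization. The points that need care are (i) using Proposition~\ref{prop:$e$-values-ia-contract} to convert incentive‑alignment of the comparison menu $\mathcal{F}'$ into the containment $\mathcal{F}' \subseteq \mathcal{F}$ --- this is precisely why $\mathcal{F} = \cost \cdot \mathcal{E}$ is the right benchmark; (ii) handling the opt‑out convention so that a type with positive but sub‑$\cost$ expected license contributes $0$, which is what keeps the pointwise comparison from being spoiled; and (iii) the routine measurability and finiteness bookkeeping needed to pass from the pointwise bound to the bound in expectation, which the paper's standing assumptions already supply.
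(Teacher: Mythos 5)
Your proposal is correct and follows essentially the same route as the paper's proof: reduce to the containment $\mathcal{F}' \subseteq \cost\cdot\mathcal{E} = \mathcal{F}$ via Proposition~\ref{prop:$e$-values-ia-contract}, observe that a maximum over a larger menu is weakly larger for every type, handle opt-outs (which contribute zero under either incentive-aligned menu), and integrate over $Q$. Your write-up is merely more explicit than the paper's terse version about the nesting step and the opt-out bookkeeping.
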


Even when the utility function is not linear in the second argument, we expect that agent participation and expected market size will be reasonable proxies of total utility, so the previous result suggests that the menu of all rescaled $e$-values will have high utility.

These results, together with Proposition~\ref{thm:maximin}, give practical guidance for the designer---if she does not have information about the prior distribution over $\theta$, she can deploy the menu of all incentive-aligned contracts. This menu is maximin optimal, and it is the best among the maximin optimal statistical contracts when the utility is linear in the second argument.

Note that although this menu is infinite, it can still be easily implemented. Since it is simple to verify whether or not a contract is incentive-aligned, the principal asks the agent for their proposed incentive-aligned contract and then proceeds with that contract, provided it is indeed incentive-aligned. For the agent offered the menu of all incentive-aligned contracts, we now derive the optimal license function for the agent (i.e., the element of the menu that he should select).

\subsection{Multiple rounds}

The connection of incentives with $e$-values also allows us to handle a multi-round interaction between the principal and agent. In each step, the agent can invest money to collect more data, with evidence accumulating. In Appendix~\ref{sec:dynamics}, we show how to design incentive-aligned statistical contracts in this multiple-round setting.

\section{The agent's best response}
\label{sec:best_evalue}

In this section, we derive the agent's best response~\eqref{eq:best_response_def} when the principal offers the menu~$\mathcal{F}$ of all rescaled $e$-values. We will show that that the agent's best response is to implement the classical likelihood ratio test. Turning to the details, suppose there is some maximal amount of profit~$\marketsize > \cost$ that the agent could make. Secondly, suppose that the null set is a singleton: $\Theta_0 = \left\{ \theta_0 \right\}$.
Then, an agent of type $\theta_1$ with linear utility seeks to maximize
\[
\E_{Z\sim P_{\theta_1}}\left[f\left( Z \right)\land \marketsize\right] \qquad \textnormal{subject to} \qquad f\in \mathcal{F}. 
\]
Noting~$\mathcal{F} = \left\{ f : \E_{Z \sim P_{\theta_0}}\left[ f\left( Z \right) \right] \le \cost \right\}$ and rescaling~$\varphi\coloneqq\frac{1}{\marketsize}f$, the agent equivalently seeks to maximize
\[
\E_{Z\sim P_{\theta_1}}\left[\varphi\left( Z \right)\land 1\right] \qquad \textnormal{subject to} \qquad \E_{Z \sim P_{\theta_0}} \left[\varphi\left( Z \right) \right] \le \frac{\cost}{\marketsize}. 
\]
The rescaled function~$\varphi : \Omega \to \left[ 0, 1 \right]$ represents the {\sl critical function} in a hypothesis test comparing the simple null~$\theta = \theta_0$ against the simple alternative~$\theta = \theta_1$. The agent's problem of finding the optimal contract is thus equivalent to the problem of designing a hypothesis test~$\varphi : \Omega \to \mathbb{R}$ to maximize power, $\E_{P_{\theta_1}}\varphi$, subject to the constraint on the Type I error, $\E_{P_{\theta_0}}\varphi\le \cost / \marketsize$. The celebrated Neyman--Pearson lemma~\citep{neyman1933ix} states that the optimal test is given by thresholding the likelihood ratio
\[
\varphi^* = \Ical \left\{\frac{\textnormal{d}P_{\theta_1}}{\textnormal{d}P_{\theta_0}} > \lambda\right\} \textnormal{ a.s. } P_{\theta_0},
\]
where the threshold $\lambda > 0$ chosen such that $\E_{P_{\theta_0}} \left[\varphi^* \right] = \cost / \marketsize$. Hence, the agent's best response is
$f^{\textnormal{br}} = \marketsize\varphi^*$.
We record this conclusion formally.
\begin{prop}[Neyman--Pearson lemma]
\label{prop:best_evalue}
Suppose $\Theta_0 = \{\theta_0\}$. The optimal incentive-aligned contract~$f^{\textnormal{br}}$ maximizing the expected license cap, $\E_{Z\sim P_{\theta_1}}\left[f\left( Z \right)\land \marketsize\right]$, for an agent of type $\theta_1 \ne \theta_0$ is given by
\[
f^{\textnormal{br}}\left( z \right) = \begin{cases}
\marketsize & \textnormal{if } \frac{\textnormal{d}P_{\theta_1}}{\textnormal{d}P_{\theta_0}}\left( z \right) > \lambda, \\
0 & \textnormal{otherwise,}
\end{cases}
\]
where~$\lambda > 0$ is chosen so that $\E_{Z\sim P_{\theta_0}} f^{\textnormal{br}}\left( Z \right) = \cost$.
\end{prop}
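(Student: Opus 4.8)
The plan is to recognize, as the discussion preceding the statement already sets up, that after rescaling $\varphi = f/\marketsize$ the agent's optimization is exactly the classical most‑powerful‑test problem, and then to prove optimality of the likelihood‑ratio test by the standard Neyman--Pearson argument. Concretely, the agent maximizes $\E_{P_{\theta_1}}[\varphi(Z)\wedge 1]$ over $\varphi \ge 0$ subject to $\E_{P_{\theta_0}}[\varphi(Z)]\le \alpha$, where $\alpha \coloneqq \cost/\marketsize \in (0,1)$ since $\marketsize>\cost$.

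First I would reduce to critical functions valued in $[0,1]$: for any feasible $\varphi$, replacing it by $\varphi \wedge 1$ does not change the objective $\E_{P_{\theta_1}}[\varphi\wedge 1]$ and can only decrease $\E_{P_{\theta_0}}[\varphi]$, so feasibility is preserved; once $\varphi \le 1$ the objective is simply the power $\E_{P_{\theta_1}}[\varphi]$. Thus the agent is choosing a level‑$\alpha$ test of the simple null $P_{\theta_0}$ versus the simple alternative $P_{\theta_1}$ to maximize power.

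Then I would apply the Neyman--Pearson fundamental lemma, or, to keep things self‑contained and to avoid assuming mutual absolute continuity, argue directly: fix a dominating measure $\mu = P_{\theta_0}+P_{\theta_1}$ with densities $p_i = \mathrm{d}P_{\theta_i}/\mathrm{d}\mu$, set $\varphi^\ast = \Ical\{p_1 > \lambda p_0\}$, and choose $\lambda>0$ so that $\E_{P_{\theta_0}}[\varphi^\ast] = \alpha$ --- possible because $\lambda \mapsto \E_{P_{\theta_0}}[\Ical\{p_1>\lambda p_0\}]$ is nonincreasing and right‑continuous, running from $P_{\theta_0}(p_1>0)$ down to $0$. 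For any competitor $\varphi: \Omega \to [0,1]$ with $\E_{P_{\theta_0}}[\varphi]\le\alpha$, the product $(\varphi^\ast-\varphi)(p_1-\lambda p_0)$ is pointwise nonnegative (both factors share a sign, by the definition of $\varphi^\ast$), so integrating against $\mu$ gives
\[
\E_{P_{\theta_1}}[\varphi^\ast]-\E_{P_{\theta_1}}[\varphi] \;\ge\; \lambda\bigl(\E_{P_{\theta_0}}[\varphi^\ast]-\E_{P_{\theta_0}}[\varphi]\bigr) \;=\; \lambda\bigl(\alpha - \E_{P_{\theta_0}}[\varphi]\bigr)\;\ge\;0 .
\]
Undoing the rescaling, $f^{\mathrm{br}} = \marketsize\varphi^\ast = \marketsize\,\Ical\{\mathrm{d}P_{\theta_1}/\mathrm{d}P_{\theta_0} > \lambda\}$ with $\E_{P_{\theta_0}}f^{\mathrm{br}}=\cost$, which is the claim.

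I do not anticipate a real obstacle, since this is the Neyman--Pearson lemma in disguise; the only care needed is bookkeeping --- making the truncation reduction rigorous, using a common dominating measure so the argument does not require $P_{\theta_1}\ll P_{\theta_0}$, and noting the familiar caveats when the likelihood ratio has an atom at the critical threshold under $P_{\theta_0}$ (where exact level requires the usual boundary randomization) or when $\alpha \ge P_{\theta_0}(\mathrm{d}P_{\theta_1}/\mathrm{d}P_{\theta_0} > 0)$ (where full power is attainable and the statement is trivial). The one genuinely load‑bearing step is the reduction from license functions $f$ to critical functions $\varphi$: truncating at $\marketsize$ costs nothing in the objective while relaxing the $e$‑value constraint.
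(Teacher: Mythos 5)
Your proposal follows the same route as the paper: rescale $\varphi = f/\marketsize$, reduce the license-choice problem to finding a most powerful level-$\cost/\marketsize$ test of the simple null versus simple alternative, and conclude via the Neyman--Pearson lemma (the paper simply cites \citet{neyman1933ix} at this point, whereas you write out the standard $(\varphi^\ast-\varphi)(p_1-\lambda p_0)\ge 0$ argument and handle the truncation step $\varphi \mapsto \varphi \wedge 1$ explicitly). This is correct and, if anything, slightly more careful than the paper about the reduction and the usual caveats on atoms of the likelihood ratio.
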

This result shows that, when choosing among all incentive-aligned contracts, an agent with~$\theta_1\ne \theta_0$ that maximizes their expected license cap will employ an all-or-nothing license function: the agent's license cap is either zero or~$\marketsize$, depending on the outcome of the trial, corresponding to the likelihood ratio test with type~I error constraint~$\cost / \marketsize$.

\subsection{A simple example}\label{sec:simple_example}

In this section, we illustrate the preceding results in the simple case where the outcome of the trial is normally distributed: 
\[
Z\sim P_\theta = \mathcal{N}\left( \theta, 1 \right).
\]
We take the null set to be $\Theta_0 = \left\{ 0 \right\}$, and the alternative to be $\Theta_1 = \left\{ 1 \right\}$.
We compare the optimal incentive-aligned statistical contract, $\mathcal{F} = \cost \cdot \mathcal{E}$, to the \emph{status quo} system: a standard statistical test that does not take incentives into account.
In particular, the status quo system is a statistical contract with a single license function,
\[
f^\textnormal{sq}\left( Z \right) =\begin{cases}
\marketsize & \textnormal{if } \overline\Phi\left( Z \right) < 0.05, \\
0 & \textnormal{otherwise},
\end{cases}
\]
where~$\overline\Phi\left( Z \right)$ is a~$p$-value based on a one-tailed~$Z$-test. The condition~$\overline\Phi\left( Z \right) < 0.05$ is equivalent to~$Z > 1.64$.  
Turning to the incentive-aligned statistical contract, when offered the menu of all $e$-values, by Proposition~\ref{prop:best_evalue} the agent will select the following license function:
\[
f^{\textnormal{br}}\left( Z \right) =\begin{cases}
\marketsize & \textnormal{if } \overline\Phi\left( Z \right) < \frac{\cost}{\marketsize}, \\
0 & \textnormal{otherwise}.
\end{cases}
\]

We next compare these two policies. There are two regimes. First, if~$\cost / \marketsize < 0.05$, the status quo license~$f^\textnormal{sq}$ is not incentive-aligned. Thus, strategic agents with $\theta = 0$ may still choose to run clinical trials.
Second, if~$\cost / \marketsize > 0.05$, the status quo license is incentive-aligned, but it is more conservative than necessary. In particular, it is more conservative than the agent's preferred license function $f^{\textnormal{br}}$. Thus, fewer drugs would be approved than with the optimal incentive-aligned statistical contract.

We define the principal's utility in terms of the costs and benefits to future patients. First, if the agent either opts out of the trial or obtains a license value~$f\left( Z \right) = 0$, then the drug does not go on the market, and future patients are unaffected by the outcome of the trial, in which case the principal's utility is zero. If a null drug enters the market, society incurs a cost of~$c_1 < 0$ due to financial burden, side effects, etc. Finally, if a nonnull drug enters the market, patients are spared of some fraction of the disease burden, resulting in positive utility~$c_2 > 0$ measured in potential years of life lost per patient. We use realistic values for these costs from~\citet{isakov2019fda}, considering two regimes: high-severity and low-severity. For the former, we have $c_2 / \left| c_1 \right| = 10$, and for the latter $c_2 / \left| c_1 \right| = 4 / 7$.

Turning to the results, we report the pincipal's expected utility in Figure~\ref{fig-sim}.
Here, we consider a distribution $Q$ over $\Theta$ with a fraction $\pi_0$ of nulls, i.e., $Q\left( \theta = 0 \right) = \pi_0$, for various values of the parameter $\pi_0$. The behavior differs in the low-profit and high-profit regimes, as explained next. The left panel shows a low-profit case, where the market cap~$\marketsize$ is five times the trial cost~$\cost$. Since~$\cost / \marketsize > 0.05$, the agents with~$\theta=0$ opt out of the trial in both systems, but the incentive-aligned menu has a higher utility because it uses a less conservative cutoff. That is, a larger fraction of the effective drugs are approved.  The right panel shows a high-profit setting, where~$\marketsize$ is fifty times the trial cost~$\cost$. Under the status quo system, null agents have an incentive to enter the trial. Thus, when $\pi_0 \approx 1$, the principal's utility is negative, since many null agents enter the trial and $5\%$ of them are approved by chance. On the other hand, Theorem~\ref{thm:maximin} implies that incentive-aligned system has higher worst-case utility, which here occurs when $\pi_0 = 1$.

\begin{remark} (The agent's best response in a sequential experiment).
    In Appendix~\ref{sec:dynamic-programming}, we study the agent's best response when evidence can be accumulated over $T$ rounds of an experiment. The agent can recruit and test subjects in batches and stop the experiment at any stage, for any reason, exiting with the current total license value. We apply dynamic programming to show how the agent can adaptively choose a license at each round to optimize their final license value. Even when the agent has a linear utility, their best response is no longer to place an all-or-nothing bet in the first round. In a simulation study, we compare to the single-round case in the simple Gaussian example above, finding that the multi-round agent can leverage cost savings to consistently achieve a comparable and in some cases significantly higher overall profit. 
\end{remark}

\begin{figure}[t!]
	\centering
	\centerline{
	\includegraphics[width=.49\textwidth]{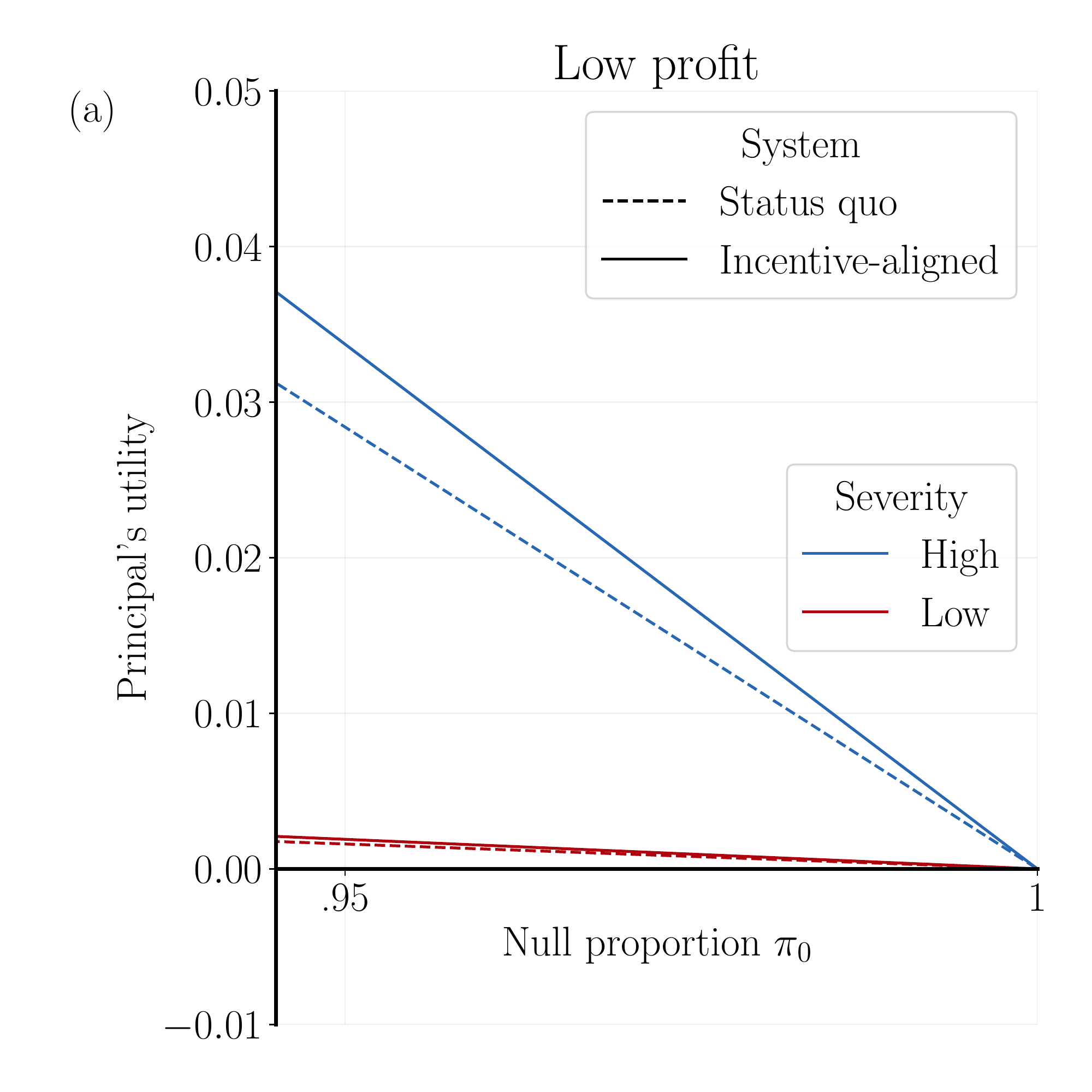}
	\includegraphics[width=.49\textwidth]{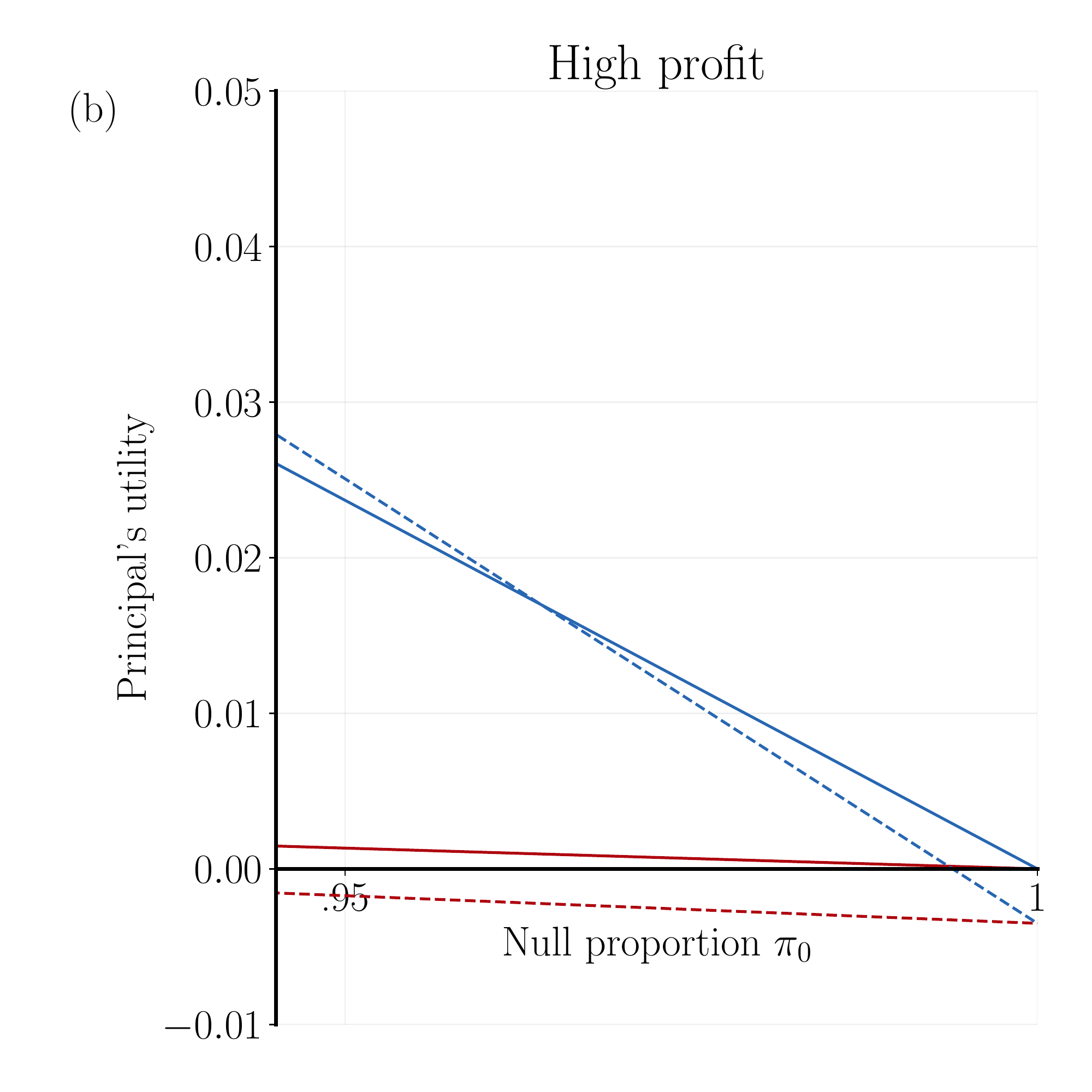}
	}
    \caption{The expected utility that the principal derives from a single agent.  
    (a) When $\marketsize / \cost = 5$, it is not profitable for null agents to enter the status quo trial, so the overall welfare is comparable to that of the incentive-aligned mechanism, except that the latter has slightly higher power. (b) When $\marketsize / \cost = 50$, null agents enter the status quo trial but not the incentive-aligned mechanism. For $\pi_0\approx 1$, the incentive-aligned mechanism has higher welfare. }\label{fig-sim}
\end{figure}

\section{Incentives and the FDA}\label{sec-fda}

In this section, we discuss the functioning of the FDA via the lens of a principal-agent problem, and the deterrence constraints this imposes on the regulator. We will analyze the current FDA protocol to see if the expected profit of a placebo drug is positive (i.e., not incentive-aligned), negative (i.e., incentive-aligned), or approximately zero. Of course, the FDA does not always follow a formal algorithm for evaluating clinical trials; it retains significant discretion in decision-making, and the standards for efficacy can vary by the treatment area or reviewing team \citep{janiaud2021us}. Yet, we can study a simplification based on its written guidelines as a point of reference.

Before a new drug can be marketed, the FDA requires a sponsor to run confirmatory clinical trials to establish the drug's safety and efficacy. In assessing this evidence for approval, FDA regulators retain significant flexibility of interpretation, but it is typically expected to provide either two confirmatory trials with positive results (often interpreted as level $p<.05$ on a two-sided test, with the result in the correct direction), or one confirmatory trial (typically multi-center, well-controlled, with strong results) in conjunction with reasoning to support a determination of ``substantial evidence.'' 
According to \cite{morant2019characteristics}, from 2012-2016 all non-orphan, non-oncology drugs which were approved with a single pivotal trial achieved $p$-values of less than $0.005$.
The study of \cite{janiaud2021us} ``[suggests] that the FDA does not have a consistent implicit or operational standard for defining `substantial evidence' in contexts
when flexible criteria are used.'' 
The FDA also has an accelerated approval pathway, where a drug may be released under less rigorous standards of evidence, based on proxies for the primary outcome of interest, and sometimes also with lower efficacy standards. After accelerated approval, the FDA may continue to review the drug for final approval as further evidence accumulates---during which time the drug is released to the market and may be sold for profit. The FDA does not cap or regulate prices and profits.
See \cite{food2019demonstrating,haslam2019confirmatory,janiaud2021us} for more information on when single versus multiple trials are used.

To facilitate formal analysis, we analyze three simplified statistical protocols that track the FDA evaluatory process. First, we consider a protocol that requires significance in two independent trials---we call this the \emph{standard} protocol. The probability that a placebo drug is approved with this protocol is $0.025^2 = 0.000625$. Second, we consider the case where only one trial is required, but at the $0.01$ level for a two-sided test. We call this the \emph{modernized} protocol, in view of the FDA Modernization Act of 1997\nocite{modernization_1997}.
We assume the probability that a placebo drug is approved with the modernized protocol is $0.005.$ Lastly, we consider a protocol where two trials are conducted and approval is granted if either trial yields a two-sided $p$-value below $0.05$, so that the probability of approving a placebo is $0.0494$. We call this the \emph{high-discretion accelerated} protocol. This protocol is looser than typical FDA behavior, but in extreme cases approvals with this low level of statistical evidence do occur. Notably, aducanumab (Aduhelm) was approved after two trials, one with a statistically significant result and one with a statistically insignificant result~\citep{aduhelmreport}.

Estimates for the cost of Phase III trials vary widely. \citet{moore2018estimated} estimate a median of 20 million in direct costs for a Phase III trial, but notes that costs can vary 10x in either direction. \citet{moore2020variation} estimate median total costs for pivotal trials (which may include 2 trials) at $\$ 50 $ million. \cite{dimasi2016innovation} estimate average total Phase III costs as $\$255$ million out-of-pocket and $\$314$ million accounting for opportunity cost of capital, although this analysis is based on a private dataset and may focus on large companies with large trials and higher costs figures \citep{love2019criticism}. \cite{wouters2020estimated} estimate mean out-of-pocket costs at $\$291$ million  among trials whose costs were broken out in published SEC filings, although this condition may systematically exclude cheaper trials. \cite{schlander2021much} provides a review. 
A sponsor aiming to bluff the system by submitting a known placebo to a clinical trial would prefer to use cheap, small trials, in order to minimize cost and maximize variance. Thus, for our analysis, we use a value of $\$50$ million for total Phase III costs. 

Various estimates place the average total capitalized cost of developing a single new successful drug at $\$161$ million - $\$4.54$ billion \citep{schlander2021much}. In equilibrium, expected profits of each approved drug should be least as large as costs. In addition, there is a long tail of drug profitability, and there exist blockbuster drugs with sales of over $\$ 100$ billion~\citep{elmhirst2019biopharma}. 

\begin{table}[t]
\centering
\small
\begin{tabular}{|l|l|l|l|l|}
\hline
\multicolumn{1}{|c|}{Protocol} & \multicolumn{1}{c|}{$\mathbb{P}$(null approval)} & \multicolumn{1}{c|}{\begin{tabular}{@{}c@{}}Profit \\  if approved\end{tabular}} & \multicolumn{1}{c|}{\begin{tabular}{@{}c@{}}Expected value \\  of placebo\end{tabular}} & \multicolumn{1}{c|}{Incentive-aligned?} \\
\hline
\hline
standard & 0.000625 & \$1B & -\$49M  & Yes \\
standard & 0.000625 & \$10B & -\$44M &  Yes \\
standard & 0.000625 & \$100B &  \$13M &  \textbf{No} \\
\hline
modernized & 0.005 & \$1B &  -\$45M & Yes  \\
modernized & 0.005 & \$10B &  \$0M & \textbf{Borderline} \\
modernized & 0.005 & \$100B & \$450M & \textbf{No} \\
\hline
accelerated & 0.0494 & \$1B  & \$-1M & \textbf{Borderline}  \\
accelerated & 0.0494 & \$10B & \$444M & \textbf{No} \\
accelerated & 0.0494 & \$100B & \$4.9B  & \textbf{No} \\
\hline
\end{tabular}

\caption{Incentive-alignment of three statistical protocols for varying drug market values, assuming a Phase III cost of $\$50$ million.}
\label{tab:FDA_realistic}
\end{table}

We report on the expected value of a placebo drug under the three protocols above in Table~\ref{tab:FDA_realistic}. We find that for typical drugs with \$1-10B profit if approved, the standard protocol requiring two trials is incentive-aligned. For extremely profitable drugs earning \$100B or more, the protocol ceases to be incentive-aligned. Furthermore, as the protocol is loosened, it ceases to be incentive-aligned even for less profitable drugs. With the high-discretion accelerated protocol, incentive alignment is lost even for a typical drug. In the blockbuster scenario, an organization may have an expected value of hundreds of millions or billions of dollars of profit for running a clinical trial on a placebo.

The primary implication of this analysis is that if the standard of evidence required by the FDA is loosened, it may cease to be incentive-aligned for the more profitable drugs. The right standard of evidence for the FDA is a source of ongoing debate, and some call for much looser protocols. For example, the Bayesian decision analysis of~\cite{isakov2019fda} prompts the authors to call for thresholds from 1\% to 30\%, depending on the class of drug. We emphasize that their analysis does not consider how the incentive landscape is impacted by adopting this looser standard of evidence. In particular, in view of Table~\ref{tab:FDA_realistic}, we worry that greatly loosening the standard of evidence may incentivize
clinical trials for unpromising candidates, resulting in too many false positives.

\cite{tetenov2016economic} pursued a similar analysis of Phase III incentives, and concluded that a Type I error of up to 15\% is incentive-aligned for an average drug. In contrast, when we consider unusually low Phase III costs and unusually profitable drugs, we find that the FDA could already be at risk of violating incentive-alignment in rare cases. 

\subsection{Limitations}

There are important limitations to the above analysis. In particular, our calculation omits additional regulatory checks against approving ineffective drugs and punishments for agents who intentionally run clinical trials for drugs they believe to be ineffective. These considerations include: additional evidence standards that the FDA may impose, the reluctance of insurers to compensate generously for a drug with marginal evidence of efficacy, lawsuits and liability costs, and the risk of a company developing a negative reputation among consumers, insurers, and regulators.

A second major limitation to our work is that our mathematical investigation considers the case where agent has perfect knowledge of the drug's efficacy. In reality, the agents (pharmaceutical companies) seldom have this knowledge. Despite this, there is a seemingly infinite supply of abandoned drug candidates. The results in this work (e.g., Theorem~\ref{thm:maximin}) imply that a statistical protocol which fails to align incentives may encourage a company to run clinical trials for many of these; thus, we believe that regulators' protocols should at the very least be incentive-aligned.
An expanded model accounting for the agent's lack of perfect knowledge would give clarity about how much more strict (beyond incentive-aligned) the protocol should be for optimal utility. 

\section{Discussion}

Our study of hypothesis testing in the principal-agent model has forged connections between statistical inference and ideas from the economic theory of mechanism design. The primary conclusion of this work is that the principal who does not know the distribution of agent types should deploy the menu of all $e$-values to create an incentive-aligned statistical contract, protecting the principal from null strategic agents. Instead of controlling the Type I error rate at some arbitrary level, our framework bases the standard of evidence on the researcher's incentives to guarantee incentive-aligned hypothesis testing. This concrete mechanism is practical and adapts to the unknown agent distribution and unknown market size. The key observation underlying this work is that $e$-values are a measure of statistical evidence that are on the right scale for reasoning about incentives; accordingly, we expect that this tool will be broadly useful for statistical inference in the presence of strategic behavior.

In the remainder of this section we consider various possible extensions to the current work.

\paragraph{The menu of all $e$-values adapts to unknown market size.}
We have focused on the case where the agent's private information is his own type $\theta$. Our setup automatically handles an additional piece of private information---market size. For example, a pharmaceutical company may know that a drug would result in a profit at most $\marketsize$, due to their market research, but the FDA may not know this. Formally, we suppose the agent's utility is $\min\left( L,\marketsize \right)$ for a cap $\marketsize$ that is unknown to the principal. 

The menu of all $e$-values extends to this setting without modification. In the single-round setting, the principal offers menu $\mathcal{F} = \cost \cdot \mathcal{E}$, and then the agent will respond by choosing the optimal license function as derived in Section~\ref{sec:best_evalue}.  This has the following natural statistical interpretation. Proposition~\ref{prop:best_evalue} shows that the best statistical protocol when the max payoff is $\marketsize$ is the Neyman--Pearson test at a certain level that depends on problem parameters---the investment amount $\cost$ and the cap $\marketsize$. The above considerations show that the principal doesn't need to know $\marketsize$; the statistical contract is set up in such a way that the self-interested agent will choose to run Neyman--Pearson test at the right level for the problem parameters. Similarly, the principal does not know the right likelihood ratio statistic because they do not know which alternative the agent has. Even though the principal lacks this information, the menu of all $e$-values is a mechanism that incentivizes the agent to run the Neyman--Pearson test at the appropriate level, given their private information of their type $\theta$ and the market size $\marketsize$.

\begin{figure}[t]
	\centering
	\centerline{
	\includegraphics[width=.5\textwidth]{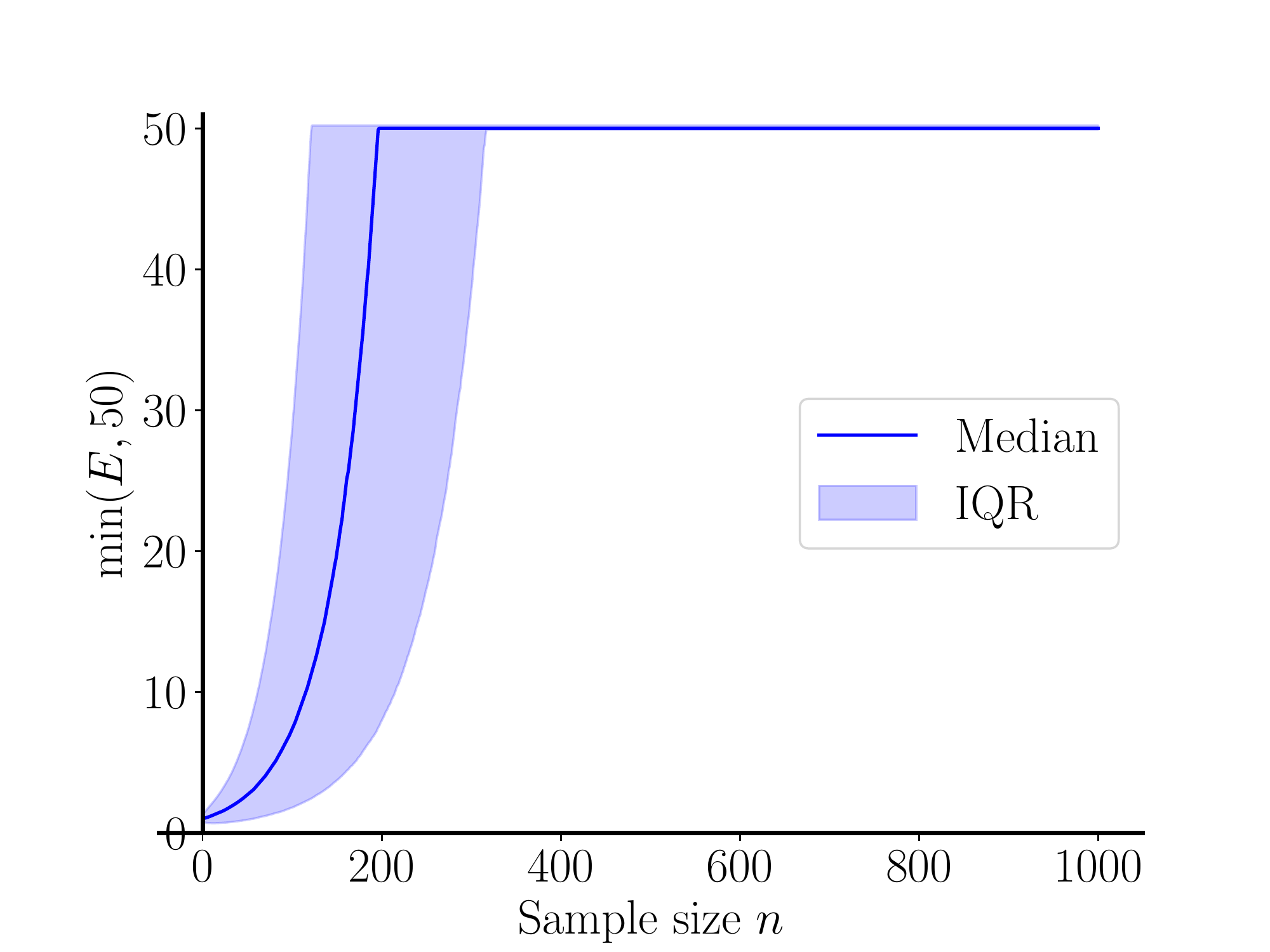}
	}
    \caption{ $e$-value $E = \exp\left( \theta_1 \sum_{i=1}^n Z_i - \frac{n\theta_1^2}{2} \right)$ as a function of~$n$, where~$\theta_1 = 0.2$. Under the alternative~$Z_i\stackrel{\textnormal{iid}}{\sim}\mathcal{N}\left( .2, 1 \right)$, the $e$-value grows exponentially, quickly reaching the market cap even in a high profit market such as in the right panel of Figure~\ref{fig-sim}. 
    }
    \label{fig-sample-size-sim}
\end{figure}

\paragraph{Choosing a sample size.} 
Another situation that can be handled with a variant of our framework is when the agent chooses the amount of statistical evidence to gather by varying their amount of investment. For example, a pharmaceutical company might select the sample size of their clinical trial based on their knowledge of the effect size. Formally, we would encode this by giving the agent a set of investment levels, $\cost \in \left\{ \cost_1,\dots,\cost_m \right\}$, with statistical evidence that depends on the investment level (e.g., on the sample size): $Z \sim P_{\theta, \cost}$. For example, if $\cost_1$ corresponds to a sample size of $100$ and $\cost_2$ corresponds to a sample of size $1000$, we might have $P_{\theta, \cost_1}$ be $100$ samples from a $\mathcal{N}\left( \theta,1 \right)$ distribution and $P_{\theta, \cost_2}$ be $1000$ samples from a $\mathcal{N}\left( \theta,1 \right)$ distribution.

This setting is of practical interest, so we briefly discuss the considerations underlying the choice of sample size. First, we note that $e$-values typically grow exponentially in the sample size. To see this, note that one way to construct $e$-values for a sample of size $n$ is to multiply together one $e$-value for each sample. For example, if $Z_i \sim \mathcal{N}\left( \theta, 1 \right)$ for $i=1,\dots,n$, then one possible $e$-value is
$
    \prod_{i=1}^n \exp\left( Z_i - 1/2 \right),
$
since $\E\left[ \exp\left( Z_i - 1/2 \right) \right] = 1$ under the null hypothesis that $\theta \le 0$.
This means that the marginal value of one additional sample is growing with sample size.
If the cost of statistical evidence grows linearly with the sample size, this means that the agent with utility function $\min\left( L,\marketsize \right)$ would either (a) decline to run a trial or (b) choose a sample size and $e$-value such that the profit license is nearly $\marketsize$ with large probability. We demonstrate this with a simulation in Figure~\ref{fig-sample-size-sim}; the $e$-value grows exponentially in the sample size until it reaches the cap $\marketsize$, so an agent deciding the sample size would choose a value near the beginning of the plateau. Thus, our proposal to limit agent profits based on the license value~$L = f\left( Z \right)$ need not limit profits of nonnull agents. 

\section*{Acknowledgements}

We thank Nivasini Ananthakrishnan, Jon McAuliffe and Aaditya Ramdas for helpful discussions.  This work was supported in part by the Vannevar Bush Faculty Fellowship program under grant number N00014-21-1-2941 and in part by the European Union (ERC-2022-SYG-OCEAN-101071601).

\bibliographystyle{chicago}
\bibliography{incentives}

\newpage

\appendix
\section{Proofs}
\label{sec:proofs}

\begin{proof}[Proof of Proposition~\ref{prop:$e$-values-ia-contract}]
Suppose $f \in \cost \cdot \mathcal{E}$. Notice the net profit the agent receives is $f\left( Z \right) - \cost = \cost \cdot \left( E - 1 \right)$ for the $e$-value $E = f\left( Z \right) / \cost$. Under $H_0$, $\E\left[ \cost \cdot \left( E - 1 \right) \right] \le 0$ by the defining property of the $e$-value $E$.

For the converse, suppose we have an incentive-aligned statistical contract with menu $\mathcal{F}$. For any $f \in \mathcal{F}$, consider the random variable $f\left( Z \right) / \cost$. This random variable is nonnegative with expectation at most 1, since otherwise the contract would not be incentive-aligned. Thus, it is an $e$-value, so the contract can be written in the form above.
\end{proof}

\begin{proof}[Proof of Proposition~\ref{prop:only-ia-contracts-feasible}]
Suppose $\mathcal{F}$ yields an incentive-aligned statistical contract. Then, all null agents opt out of the statistical contract. Thus
\begin{equation*}
    \mathbb{E}\left[ u\left( \theta, L \right) \right] = \mathbb{E}\left[ u\left( \theta, L \right) \cdot \Ical\left\{ \theta = \theta_1 \right\} \right] \ge 0,
\end{equation*}
where the inequality follows from the fact that the principal's utility is nonnegative when $\theta = \theta_1$.

Conversely, suppose $\mathcal{F}$ is not incentive-aligned. Then, for some $\theta_0 \in \Theta_0$ and $f \in \mathcal{F}$, we have that $\mathbb{E}_{Z \sim P_{\theta_0}}\left[ f\left( Z \right) \right] \ge \cost$. In particular, $P_{\theta_0}\left( f\left( Z \right) > 0 \right) > 0$. Thus, $\mathbb{E}_{Z \sim P_{\theta_0}}\left[ u\left( \theta_0, f\left( Z \right) \right) \right] < 0$ by our assumption on $u\left( \cdot, \cdot \right)$. Note that the principal's expected utility is
\begin{equation*}
    \mathbb{E}\left[ u\left( \theta, L \right) \right] = \pi_0 \cdot \mathbb{E}_{Z \sim P_{\theta_0}}\left[ u\left( \theta_0, f\left( Z \right) \right) \right] + \left( 1 - \pi_0 \right) \cdot \mathbb{E}_{Z \sim P_{\theta_1}}\left[ u\left( \theta_1, f\left( Z \right) \right) \right].
\end{equation*}
Since the positive utility is bounded above, we see that as $\pi_0 \to 1$, this quantity converges to a negative number. This proves the claim.
\end{proof}

\begin{proof}[Proof of Theorem~\ref{thm:maximin}]
For any $\mathcal{F}$, notice that the expected utility is nonpositive for $Q$ supported only on $\theta_0$, by our assumption on the utility function $u\left( \cdot, \cdot \right)$, so the worst-case utility of a maximin statistical contract is zero. But the utility of incentive-aligned statistical contracts is always nonnegative; since null agents opt out of incentive-aligned statistical contracts, we have that the principal's expected utility is
\begin{equation*}
    \mathbb{E}\left[ u\left( \theta, L \right) \right] = \mathbb{E}\left[ u\left( \theta, L \right) \cdot \Ical\left\{ \theta \in \Theta_1 \right\} \right] \ge 0.
\end{equation*}
We conclude that incentive-aligned statistical contracts are maximin.

Conversely, Proposition~\ref{prop:only-ia-contracts-feasible} implies that
only incentive-aligned statistical contracts are maximin. 
\end{proof}

\begin{proof}[Proof of Theorem~\ref{prop:opt_linear_util}]
\begin{align*}
\E_{\theta \sim Q}\left[ \E\left[ u\left( \theta, f^{\text{br}}\left( Z; \theta, \mathcal{F} \right) \right) \mid \theta \right] \right] 
    &= \E_{\theta \sim Q}\left[ \E\left[g_0\left( \theta \right) + g_1\left( \theta \right) \cdot f^{\text{br}}\left( Z; \theta, \mathcal{F} \right) \mid \theta \right] \right] \\
    &= \E_{\theta \sim Q}\left[ \E\left[g_0\left( \theta \right) + g_1\left( \theta \right) \cdot f^{\text{br}}\left( Z; \theta, \mathcal{F} \right) \mid \theta \right] \cdot \Ical\left\{ \theta \in \Theta_1 \right\} \right] \\
    &= \E_{\theta \sim Q}\left[ \left( g_0\left( \theta \right) + g_1\left( \theta \right) \cdot \E\left[  f^{\text{br}}\left( Z; \theta, \mathcal{F} \right) \mid \theta \right] \right) \cdot \Ical
    \left\{\theta \in \Theta_1 \right\} \right],
\end{align*}
for some functions $g_0$ and $g_1$ by the assumption that $u\left( \theta, L \right)$ is affine in $L$ when $\theta \in \Theta_1$. The second equality comes from the assumption that $\mathcal{F}$ is incentive-aligned.

The final expression above is larger for $\mathcal{F} = \mathcal{F}_1$ than from $\mathcal{F} = \mathcal{F}_2$ when $\mathcal{F}_1 \supset \mathcal{F}_2$, by the definition of the agent's best response function $f^{\text{bf}}\left( \cdot; \theta, \mathcal{F} \right)$; see~\eqref{eq:best_response_def}. By Proposition~\ref{prop:$e$-values-ia-contract}, $\mathcal{F}$ in the theorem statement contains any incentive-aligned $\mathcal{F}'$. The claim follows.
\end{proof}

\begin{proof}[Proof of Proposition~\ref{prop:max_participation}]
It is clear that larger menus result in more participation.

For the second part, 
\begin{equation*}
\E\left[ f^{\textnormal{br}}\left( \cdot ; \theta, \mathcal{F} \right) \mid \theta \right] \ge \E\left[ f^{\textnormal{br}}\left( \cdot ; \theta, \mathcal{F}' \right) \mid \theta \right].
\end{equation*}
for any $\theta \in \Theta_1$ by the definition of the best response function $f^{\text{br}}$. Note that
\begin{equation*}
\E_{\theta \sim Q}\left[ \E\left[ f^{\textnormal{br}}\left( \cdot ; \theta, \mathcal{F} \right) \mid \theta \right] \right] = \E_{\theta \sim Q}\left[ \E\left[ f^{\textnormal{br}}\left( \cdot ; \theta, \mathcal{F} \right) \mid \theta\right] \cdot \Ical\left\{ \theta \in \Theta_1 \right\} \right]
\end{equation*}
and similarly for $\mathcal{F}'$, since $\mathcal{F}$ and $\mathcal{F}'$ are incentive-aligned. The result follows.
\end{proof}

\begin{proof}[Proof of Proposition~\ref{prop:$e$-values-ia-contract_multi}]
For the first result, let $N\left( 0 \right) = 0$ and define $N\left( t \right)$ for $t=1,\dots,T$ as in~\eqref{eq:net_profit_mgale}. Then, $N\left( t \right)$ for $t=0,\dots,T$ is a supermartingale when $\theta \in \Theta_0$.
The contract exits with profit $L\left( \tau \right)$ for some stopping time $\tau$, so the result follows.

For the converse, if the menu contains update functions that are not $e$-values, then for some $t$, some $\theta_0 \in \Theta_0$ and some $f_t$, $\E_{Z \sim P_{\theta_0}}\left[ f_t\left( Z \right) \right] > 1$. Thus the strategy that runs the trial only at stage $t$ and makes no withdrawals has expected profit
\begin{equation*}
    \cost_t \cdot \E_{Z \sim P_{\theta_0}}\left[ \left( f_t\left( Z \right) - 1 \right) \right] > 0
\end{equation*}
for an agent of type $\theta_0$. Thus, the statistical contract is not incentive-aligned.
\end{proof}

\begin{proof}[Proof of Theorem~\ref{thm:maximin_multi}]
The proof is analogous to that of Theorem~\ref{thm:maximin}.
\end{proof}

\begin{proof}[Proof of Proposition~\ref{prop:opt_linear_util_multi}]
The proof is analogous to that of Theorem~\ref{prop:opt_linear_util}
\end{proof}

\section{Dynamics}
\label{sec:dynamics}

We now consider the setting in which statistical evidence is gathered in stages. At each step, 
the agent can choose to withdraw from their profit license and/or to invest
money to gather more evidence.
For example, a clinical trial could proceed by first recruiting 100 subjects, and
then proceeding with batches of 100 subjects until either the profit cap is
sufficiently large or the company chooses to abandon the trial. It is possible
to develop incentive-aligned statistical contracts in this more complex setting using martingales to create an evolving profit cap that cannot be abused by a 
strategic agent.

\subsection{Profit license in multiple-round setting}

Let us now move to the multi-round setting. Now consider: what is the space of possible policies for the principal? It's very large, and must include notions such as ``allow the agent to choose any test with $p < .025$, and then reap $X$ of market profits." We consider the class of sequential games in discrete time.

\begin{center}
\fbox{
    \begin{minipage}{6.in}
    \vspace{.5em}
    \textbf{Multi-round Sequential Game} 
    \\
    For $t = 1,2\dots,T$:
        \begin{enumerate}[topsep=0pt,itemsep=-1ex,partopsep=1ex,parsep=1ex]
            \item The agent is allowed to withdraw, with no further payments.
            \item If the agent continues, the agent receives wealth $W\left( t \right)$, within a range of choices which may depend on all prior states of the game. ($W\left( t \right)$ would be negative if it is a payment to continue the experiment, and positive for a profitable payout.)
            \item Evidence $X\left( t \right)$ is generated, which may depend on $W\left( t \right)$, the value of $\theta$, and all prior states of the game.
        \end{enumerate}
    Under the convention $W\left( 0 \right) = 0$, the agent ends the trial with wealth $\sum \limits_{t=1}^T W\left( t \right)$, 
    \vspace{.5em}
    \end{minipage}}
\end{center}

In a principal-agent game, the principal may place constraints on the actions or reward available to the principal. For example, the principal (a regulator) can effectively constrain the Type I error rate of a design proposal of the agent (a drug sponsor), by preventing the agent from earning any positive rewards if the evidence standard is not met. Subject to these constraints, the sponsor may actively manage its clinical trials costs, such as early-exiting with futility stopping, increasing investment with adaptive sample sizing. It may also accrue some minor profits while trials are in progress.

We now consider the following class of contracts, which we shall call ``profit licenses."

\begin{center}
\fbox{
    \begin{minipage}{6.in}
    \vspace{.5em}
    \textbf{Multi-round Profit License} \vspace{.2cm}\\ 
    The initial license value is $L\left( 0 \right) = 0$. \\
    For $t = 1,2\dots,T$
        \begin{enumerate}[topsep=0pt,itemsep=-1ex,partopsep=1ex,parsep=1ex]
            \item The agent chooses to  withdraw
            $P\left( t \right) \in \left( 0, L\left( t-1 \right) \right)$ from their license, or continue without a withdrawal.
            \item The agent may pay $\cost_t$ to run stage $t$ of the trial, in which case:
            \begin{enumerate}[topsep=0pt,itemsep=-1ex,partopsep=1ex,parsep=1ex]
            \item The agent selects an update function $f_t$ from the menu $\mathcal{F}_t$.
            \item The trial produces evidence $Z_t \sim P_\theta$.
            \item The license value is updated as $L\left( t \right) = \left( L\left( t-1 \right) + \cost_t - P\left( t \right) \right)\cdot f_t\left( Z_t \right)$.
            \end{enumerate}
            \item Otherwise, license value is updated as $L\left( t \right) = L\left( t-1 \right) - P\left( t \right)$.
        \end{enumerate}
    The agent ends the trial with license value $L\left( T \right)$. \vspace{.5em}
    \end{minipage}}
\end{center}

In our FDA-pharma example, 
the agent is permitted to sell the drug to make $P\left( t \right)$ of profit at intermediate stages $t$, and ends the trial with a remaining license value of $L\left( T \right)$ of future profits. More generally, this setup allows the agent to either withdraw from their profit cap or invest in order to gather more statistical evidence in a dynamic fashion. This rich interaction offers great flexibility to the agent, while accounting for complex situations involving sequential data collection, adaptive trial sizing, and profit-taking. 

It might look like the class of such games $G$ is extremely large and, by comparison, the class of profit licenses is too restrictive. It turns out the class of profit licenses is sufficiently rich to include every incentive-aligned protocol which is exitable. Therefore the profit license characterizes the set of incentive-aligned policies within this seemingly large set of sequential interactions.

As with the single-round setting, we will show that the principal can use $e$-values to create incentive-aligned multi-round contracts. To discuss this precisely, we will need some additional notation. We let $I_t \in \left\{ 0, 1 \right\}$ be the indicator of whether the agent chooses to run the trial in stage $t$.
We let $L = L\left( T \right)$ be the license value at termination, $\cost = \sum_{t=1}^T \cost_t \cdot I_t$ be the total investment the agent made, and $P = \sum_{t=1}^T P\left( t \right)$ be the total withdrawal from the license that took place during the trial.
For any set of menus $\mathcal{F}_1,\dots,\mathcal{F}_T$, we saw that an \emph{agent strategy} is a protocol for choosing actions at each step. Formally, this is a collection of mappings from all the information until time $t-1$ (i.e., $P\left( 1 \right),\dots,P\left( t-1 \right)$, $I_1,\dots,I_{t-1}$, $Z_1,\dots,Z_{t-1}$, and $f_1,\dots,f_{t-1}$) to the actions at time $t$: the amount to withdraw $P\left( t \right)$, whether or not to run the trial $I_t$, and the choice of the update function $f_t \in \mathcal{F}_T$. For any set of menus $\mathcal{F}_1,\dots,\mathcal{F}_T$, and agent type $\theta \in \Theta$, we let
${\textsc{br-strategy}}\left( \theta ; \mathcal{F}_1,\dots,\mathcal{F}_T \right)$ be an agent strategy that maximizes $\E\left[ L + P - \cost \right]$.
Analogous to the single-round case, the principal's utility function $u\left( \theta, L + P \right)$ is a function of the agent type $\theta$ and the total profit $L + P$, with the shape constraints considered previously in Section~\ref{subsec:statistical-contracts-setup}.

With this notation, we say that a multi-round statistical contract is \emph{incentive-aligned} if for any agent with null type $\theta_0 \in \Theta_0$ and any agent strategy, we have,
\begin{equation}
\label{eq:multistage_ia_def}
    \E\left[ L + P - \cost \right] \le 0. 
\end{equation}
The term within the expectation should be understood as the agent's profit: it is their future profit $L$ plus their withdrawals $P$ minus their total investment $\cost$. As previewed above, $e$-values lead to incentive-aligned contracts:

\begin{prop}[Characterization of incentive-aligned multi-round contracts]
\label{prop:$e$-values-ia-contract_multi}
Suppose $\mathcal{F}_t \subset \mathcal{E}$ for $t = 1,\dots,T$. Then, the statistical contract is incentive-aligned as in~\eqref{eq:multistage_ia_def}.
Conversely, if $\mathcal{F}_t \not\subset \mathcal{E}$ for some $t$, then there exists an agent strategy for some agent  type $\theta_0 \in \Theta_0$ that has positive expected profit for the agent. That is, the statistical contract is not incentive-aligned.
\end{prop}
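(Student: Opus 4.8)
The plan is to reduce the multi-round statement to a single supermartingale argument, mirroring the single-round characterization (Proposition~\ref{prop:$e$-values-ia-contract}) but now tracking the agent's running net worth across rounds. For the forward direction I would introduce the \emph{net profit process}
\begin{equation*}
N(t) = L(t) + \sum_{s=1}^t P(s) - \sum_{s=1}^t \cost_s I_s, \qquad N(0) = 0,
\end{equation*}
so that $N(T) = L + P - \cost$ is exactly the agent's final profit appearing in~\eqref{eq:multistage_ia_def}. The key computation is the one-step update: writing $A_t = L(t-1) + \cost_t - P(t)$ for the amount placed at stake in round $t$, a short bookkeeping calculation gives $N(t) = N(t-1) + A_t\bigl(f_t(Z_t) - 1\bigr)$ when the agent runs the trial ($I_t = 1$) and $N(t) = N(t-1)$ when it does not. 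Here $A_t \ge 0$ because the withdrawal constraint forces $P(t) < L(t-1)$ (or $P(t) = 0$) while $\cost_t \ge 0$, and $A_t$, together with $I_t$ and $f_t$, is measurable with respect to the history $\mathcal{H}_{t-1}$ available to the agent before $Z_t$ is observed.

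Given this recursion, the forward direction follows quickly: conditioning on $\mathcal{H}_{t-1}$ and using that $f_t \in \mathcal{E}$, so $\E[f_t(Z_t) \mid \mathcal{H}_{t-1}] \le 1$ under $P_{\theta_0}$ for every $\theta_0 \in \Theta_0$, we get $\E[N(t) \mid \mathcal{H}_{t-1}] = N(t-1) + A_t\bigl(\E[f_t(Z_t)\mid \mathcal{H}_{t-1}] - 1\bigr) \le N(t-1)$; hence $N$ is a supermartingale under $P_{\theta_0}$. Since any agent strategy terminates by round $T$ (and the agent's exit is a bounded stopping time $\tau \le T$), taking expectations — or the optional stopping theorem — gives $\E[L + P - \cost] = \E[N(\tau)] \le N(0) = 0$, which is precisely~\eqref{eq:multistage_ia_def}.

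For the converse, suppose $\mathcal{F}_t \not\subset \mathcal{E}$, so there is $\theta_0 \in \Theta_0$ and $f_t \in \mathcal{F}_t$ with $\E_{Z \sim P_{\theta_0}}[f_t(Z)] > 1$. Consider the strategy that takes no withdrawal and runs no trial in rounds $1,\dots,t-1$, then at round $t$ sets $P(t) = 0$, runs the trial with update function $f_t$, and thereafter does nothing. Then $L(t-1) = 0$, so $A_t = \cost_t$, and the final profit is $\cost_t\bigl(f_t(Z_t) - 1\bigr)$, with expectation $\cost_t\bigl(\E_{P_{\theta_0}}[f_t(Z_t)] - 1\bigr) > 0$; so the contract is not incentive-aligned, as claimed.

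The main obstacle I anticipate is getting the bookkeeping behind the clean identity $N(t) = N(t-1) + A_t(f_t(Z_t)-1)$ exactly right — in particular, correctly accounting for the fact that a withdrawal $P(t)$ simultaneously shrinks the staked amount and enters the agent's pocket, and that the round-$t$ investment $\cost_t$ appears both inside $A_t$ and in the running cost — together with checking the predictability/measurability conditions needed to invoke the $e$-value bound inside a conditional expectation (that the agent's round-$t$ choices depend only on $\mathcal{H}_{t-1}$ and that the conditional law of $Z_t$ is still $P_\theta$). Once the identity and these conditions are in place, the rest is a routine application of the supermartingale property and optional stopping.
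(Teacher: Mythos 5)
Your proposal is correct and follows essentially the same route as the paper: the paper defines the same net profit process $N(t) = L(t) + \overline{P}(t) - \overline{\cost}(t)$, asserts it is a supermartingale under any null type, and invokes optional stopping, while the converse uses exactly your single-stage strategy with expected profit $\cost_t\,\E_{P_{\theta_0}}[f_t(Z_t)-1] > 0$. Your explicit one-step identity $N(t) = N(t-1) + A_t(f_t(Z_t)-1)$ and the measurability checks merely fill in details the paper leaves implicit.
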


As in the single-round setting, incentive-aligned statistical contracts are maximin optimal, as stated next.
\begin{theorem}[Maximin if and only if incentive-aligned]
\label{thm:maximin_multi}
Recall $Q$ denotes a distribution over agent types $\theta$. The following inequality holds if and only if $\mathcal{F}_1,\dots,\mathcal{F}_T$ yield an incentive-aligned statistical contract:
\begin{multline}
        \inf_{Q}\left\{\E_{\theta \sim Q}\left[ \E\left[ u\left( \theta, L + P \right) \mid \theta; {\textsc{br-strategy}}\left( \theta ; \mathcal{F}_1,\dots,\mathcal{F}_T \right) \right] \right] \right\} \\ = \sup_{\mathcal{F}_1',\dots,\mathcal{F}_T'}\inf_{Q}\left\{\E_{\theta \sim Q}\left[ \E\left[ u\left( \theta, L + P \right) \mid \theta; {\textsc{br-strategy}}\left( \theta ; \mathcal{F}_1',\dots,\mathcal{F}_T' \right) \right] \right] \right\}.
\end{multline}
\end{theorem}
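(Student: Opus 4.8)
The plan is to transcribe the proof of Theorem~\ref{thm:maximin} into the sequential language, substituting Proposition~\ref{prop:$e$-values-ia-contract_multi} for Proposition~\ref{prop:$e$-values-ia-contract} wherever a characterization of incentive-alignment is needed. The first move is to pin down the value of the maximin problem, i.e., the common quantity appearing on both sides of the displayed equation. For \emph{any} menus $\mathcal{F}_1',\dots,\mathcal{F}_T'$, evaluate the principal's utility at $Q=\delta_{\theta_0}$ for a null type $\theta_0\in\Theta_0$: since $L\ge 0$ and $P\ge 0$ always, and $u(\theta_0,\cdot)$ is nonpositive on $[0,\infty)$ by the sign constraints, the expected utility is $\le 0$; hence $\sup_{\mathcal{F}'}\inf_Q\{\cdots\}\le 0$.

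For the ``if'' direction, suppose $\mathcal{F}_t\subset\mathcal{E}$ for every $t$, so by Proposition~\ref{prop:$e$-values-ia-contract_multi} the contract is incentive-aligned: every agent strategy yields a null agent expected profit $\E[L+P-\cost]\le 0$. The do-nothing strategy (never run a trial, never withdraw) achieves profit exactly $0$, so a null agent's best-response profit equals $0$, and under the opt-out tie-breaking convention inherited from the single-round model the null agent declines; thus null types contribute $0$ to the principal's utility. Nonnull types contribute $\ge 0$ since $u(\theta_1,\cdot)\ge 0$. Hence $\E_{\theta\sim Q}[\cdots]\ge 0$ for every $Q$, so $\inf_Q\{\cdots\}\ge 0$, and combining with the upper bound above shows the incentive-aligned contract attains the maximin value, which is therefore $0$; so both sides of the display equal $0$.

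For the ``only if'' direction, suppose $\mathcal{F}_t\not\subset\mathcal{E}$ for some $t$. The converse half of Proposition~\ref{prop:$e$-values-ia-contract_multi} supplies an agent strategy for some null type $\theta_0$ with strictly positive expected profit, so the best-response strategy also satisfies $\E[L+P-\cost]>0$; since $\cost\ge 0$ this forces $\E[L+P]>0$, hence $P_{\theta_0}(L+P>0)>0$, and the strict sign constraint ($u(\theta_0,\ell)<0$ for $\ell>0$) gives $\E[u(\theta_0,L+P)]<0$ under the best response. Evaluating the principal's utility at $Q=\delta_{\theta_0}$ --- or, to match the statement of Proposition~\ref{prop:only-ia-contracts-feasible}, at a mixture $\pi_0\delta_{\theta_0}+(1-\pi_0)\delta_{\theta_1}$ and letting $\pi_0\to 1$, using that $u(\theta_1,\cdot)\le a_1$ is bounded --- gives $\inf_Q\{\cdots\}<0$, strictly below the maximin value $0$; so a non-incentive-aligned contract is not maximin, completing the characterization.

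The step I expect to be the main obstacle is the claim, inside the ``if'' direction, that a null agent facing an incentive-aligned contract genuinely \emph{opts out} rather than playing some elaborate adaptive stopping strategy whose expected profit is exactly $0$ but which still exposes the principal to losses on favorable sample paths. This is precisely what Proposition~\ref{prop:$e$-values-ia-contract_multi} secures: the net-profit process is a supermartingale under every null, so the optional-stopping bound caps the expected profit of \emph{every} stopping strategy at $0$. Granting that proposition and the opt-out convention, no new estimates are required and the remainder is a routine adaptation of the single-round argument.
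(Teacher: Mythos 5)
Your proof is correct and is exactly the argument the paper intends: the paper's own proof of Theorem~\ref{thm:maximin_multi} is the single line ``analogous to Theorem~\ref{thm:maximin},'' and you have carried out that analogy faithfully, pinning the maximin value at zero via a null point mass, using Proposition~\ref{prop:$e$-values-ia-contract_multi} (i.e., the supermartingale plus optional stopping) in place of the single-round characterization, and invoking the opt-out convention for indifferent null agents. No gaps.
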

The notation ``${\textsc{br-strategy}}\left( \dots \right)$'' is used to indicate that the agent acts according to a best-response strategy, given their knowledge of their type~$\theta$.

As in the single-round setting, we again see that larger menus are better for the agent, so making larger menus will maximize agent participation and the expected profit cap. That is, the analog of Proposition~\ref{prop:max_participation} holds in the multi-round setting as well. We omit a formal statement for brevity. In addition, when the principal's utility is linear in $L$ for nonnull $\theta$, then the menu of all $e$-values is the best incentive-aligned statistical contract, as stated next.

\begin{prop}[Optimality of the menu of all incentive-aligned statistical contracts under linear utility]
\label{prop:opt_linear_util_multi}
Consider the multi-round setting above. Assume $u\left( \theta_1, L \right)$ is affine in $L$ for $\theta_1 \in \Theta_1$, and let $Q$ be any distribution over $\Theta$. Let $\mathcal{F}_t = \cost_t \cdot \mathcal{E}$ for $t = 1,\dots,T$, and let $\mathcal{F}_t'$ for $t=1,\dots,T$ be any other menu of license functions such that the statistical contract is incentive-aligned. Then,
\begin{align*}
        \E_{\theta \sim Q}&\left[  \E\left[ u\left( \theta, L + P \right) \mid \theta; {\textsc{br-strategy}}\left( \theta ; \mathcal{F}_1,\dots,\mathcal{F}_T \right) \right] \right] \\ &\ge \E_{\theta \sim Q}\left[ \E\left[ u\left( \theta, L + P \right) \mid \theta; {\textsc{br-strategy}}\left( \theta ; \mathcal{F}_1',\dots,\mathcal{F}_T' \right) \right] \right].
\end{align*}
\end{prop}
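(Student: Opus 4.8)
The plan is to follow the proof of Theorem~\ref{prop:opt_linear_util} almost verbatim, with the single license value $L$ replaced by the agent's total payout $L+P$ and the best-response license function replaced by the best-response strategy. First, using that $u(\theta_1,\cdot)$ is affine on $[0,\infty)$ for every $\theta_1\in\Theta_1$, write $u(\theta_1,\ell)=g_0(\theta_1)+g_1(\theta_1)\,\ell$ with $g_1(\theta_1)\ge 0$, since $u(\theta_1,\cdot)$ is nondecreasing. Next, adopt the convention (as in the single-round case) that an agent with no strategy of positive expected profit opts out, never investing or withdrawing, and so contributes $u(\theta,0)=0$ to the principal. Because the contract with menus $(\mathcal{F}_t)_{t=1}^T$ is incentive-aligned, every null type has $\E[L+P-\cost]\le 0$ under every strategy, so null types opt out under both $(\mathcal{F}_t)$ and $(\mathcal{F}_t')$ and contribute $0$ to each side. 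By Proposition~\ref{prop:$e$-values-ia-contract_multi}, incentive-alignment of $(\mathcal{F}_t')$ forces $\mathcal{F}_t'\subset\mathcal{E}$ for all $t$, so every $(\mathcal{F}_t')$-strategy is feasible under the menu $(\mathcal{F}_t)$ of all rescaled $e$-values; hence a nonnull type that opts out under $(\mathcal{F}_t)$ also opts out under $(\mathcal{F}_t')$ (both sides get $0$), and a nonnull type participating under $(\mathcal{F}_t)$ but not $(\mathcal{F}_t')$ contributes $u(\theta_1,\cdot)\ge 0$ against $0$. In every such case the claimed inequality already holds type-by-type.

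It therefore remains to compare the two menus on nonnull types $\theta_1$ that participate under both. For such a type the principal's contribution is $g_0(\theta_1)+g_1(\theta_1)\,\E\!\left[L+P\mid\theta_1;\ \text{br-strategy}\right]$, so, since $g_1(\theta_1)\ge 0$, the entire proposition reduces to the single inequality $\E^{\sigma}[L+P]\ge\E^{\sigma'}[L+P]$, where $\sigma$ and $\sigma'$ denote the agent's best responses under $(\mathcal{F}_t)$ and $(\mathcal{F}_t')$. This monotonicity is the crux, and it is where the multi-round setting genuinely departs from the single-round one: in the single-round game the agent's objective $\E[f(Z)]$ \emph{is} $\E[L]$ up to the fixed cost $\cost$, so enlarging the menu automatically enlarges $\E[L]$; in the multi-round game the agent maximizes $\E[L+P-\cost]$ with a \emph{strategy-dependent} total cost $\cost=\sum_t\cost_t I_t$, so a priori a larger menu could let the agent lock in the same profit using fewer rounds and hence with a smaller expected payout, and ``best response over a larger feasible set'' no longer immediately yields a larger $\E[L+P]$.

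I would close this gap by a coupling/backward-induction argument. Starting from the nonnull agent's best response $\sigma'$ under $(\mathcal{F}_t')$, first build an $(\mathcal{F}_t)$-strategy $\hat\sigma$ that keeps all of $\sigma'$'s investment and withdrawal decisions but replaces each selected update function $f_t'\in\mathcal{F}_t'\subset\mathcal{E}$ by an $e$-value $\hat f_t\in\mathcal{F}_t$ with $\E_{Z\sim P_{\theta_1}}[\hat f_t(Z)]\ge\E_{Z\sim P_{\theta_1}}[f_t'(Z)]$; such $\hat f_t$ exists because $f_t'$ is already an $e$-value and $\mathcal{F}_t$ is the menu of all rescaled $e$-values. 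A stagewise induction on the running license value then gives $\E^{\hat\sigma}[L+P]\ge\E^{\sigma'}[L+P]$ with $\E^{\hat\sigma}[\cost]=\E^{\sigma'}[\cost]$, so $\hat\sigma$ is at least as profitable as $\sigma'$ and is feasible under $(\mathcal{F}_t)$. The genuinely hard remaining step is to pass from $\hat\sigma$ to the \emph{actual} best response $\sigma$ under $(\mathcal{F}_t)$ --- that is, to rule out that $\sigma$ trades expected payout for a cheaper strategy; I expect this to follow from monotonicity and convexity, in the current license value, of the agent's dynamic-programming value function, together with boundedness of the attainable payoffs (the cap $\marketsize$), established by backward induction over $t=T,\dots,1$. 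That value-function argument is where I expect the real work to lie. Once $\E^{\sigma}[L+P]\ge\E^{\sigma'}[L+P]$ is in hand, combining it with the first two paragraphs and the fact (Proposition~\ref{prop:$e$-values-ia-contract_multi}) that $(\mathcal{F}_t)=(\cost_t\cdot\mathcal{E})$ contains every incentive-aligned menu completes the proof.
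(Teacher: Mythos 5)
You have correctly reproduced the easy parts of the argument (the affine decomposition on $\Theta_1$, the vanishing of null contributions by incentive-alignment, and the participation comparison via menu inclusion), and these match the paper's proof, which consists entirely of the sentence ``analogous to Theorem~\ref{prop:opt_linear_util}.'' But your proposal is incomplete at exactly the step you flag yourself: the inequality $\E^{\sigma}\left[ L + P \right] \ge \E^{\sigma'}\left[ L + P \right]$ for a participating nonnull type is asserted, sketched, and then deferred (``that value-function argument is where I expect the real work to lie''). This is a genuine gap, not a formality. In the single-round case the agent's objective and the principal-relevant quantity coincide up to the fixed constant $\cost$, so ``best response over a superset'' immediately yields the monotonicity. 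In the multi-round case the best response maximizes $\E\left[ L + P - \cost \right]$ with $\cost = \sum_t \cost_t I_t$ strategy-dependent, so enlarging the menu guarantees a larger optimal value of $\E\left[ L + P - \cost \right]$ but not of $\E\left[ L + P \right]$: nothing you have written rules out that the best response under the full menu attains its higher net profit by investing in fewer rounds and ending with a strictly smaller expected payout $\E\left[ L + P \right]$, which would make the principal worse off. Your coupling step (replacing each $f_t'$ by an $e$-value with larger mean under $P_{\theta_1}$, keeping the investment/withdrawal decisions) only shows that a strategy at least as good as $\sigma'$ \emph{exists} in the larger menu with the same cost profile; it says nothing about the strategy the agent actually plays. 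Closing this requires the convexity/monotonicity argument for the dynamic-programming value function that you name but do not carry out, and until it is carried out the proposition is not proved.

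Two smaller points. First, the paper itself does not supply this missing argument either --- its proof is a one-line appeal to analogy --- so you have in fact located a real weakness in the paper rather than missed something it contains; but a blind proof still needs to supply the step. Second, be careful with the containment you rely on: Proposition~\ref{prop:$e$-values-ia-contract_multi} characterizes incentive-aligned multi-round menus by $\mathcal{F}_t' \subset \mathcal{E}$ (the updates are multiplicative, so the natural maximal menu is $\mathcal{E}$ itself), whereas the statement writes $\mathcal{F}_t = \cost_t \cdot \mathcal{E}$; if $\cost_t \ne 1$ these are different sets and $\mathcal{F}_t' \subset \mathcal{E}$ does not give $\mathcal{F}_t' \subset \cost_t \cdot \mathcal{E}$. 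You implicitly read $\mathcal{F}_t$ as the maximal incentive-aligned menu, which is surely the intent, but the discrepancy should be stated rather than silently absorbed, since the inclusion $\mathcal{F}_t' \subset \mathcal{F}_t$ is load-bearing for every comparison in your argument.
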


From Theorem~\ref{thm:maximin_multi} and Proposition~\ref{prop:opt_linear_util_multi}, we conclude that for a principal without knowledge of the distribution of agent types $\theta$, a good choice of statistical contract is to take the menu of all $e$-values at each step: $\mathcal{F}_t = \mathcal{E}$ for $t=1,\dots,T$.

\subsection{Martingale underpinnings}
We have seen above that the multi-round contract with $e$-value updates is incentive-aligned: a null agent has expected value that is less than or equal to their total investment. The main mathematical ingredient underlying this result is the construction of a random process that is a supermartingale for null agents. In particular, we consider the net profit
\begin{equation}
\label{eq:net_profit_mgale}
    N\left( t \right) = L\left( t \right) + \overline{P}\left( t \right) - \overline{\cost}\left( t \right)
\end{equation}
for $t=1,\dots,T$, where $\overline{P}\left( t \right) = P\left( 1 \right) + \dots + P\left( t \right)$ is the total realized profit up to time $t$ and $\overline{\cost}\left( t \right) = \cost_1 I_1 + \dots, \cost_t I_t$ is the total investment up to time $t$. $N\left( t \right)$ represents net profit, since the agent has a license to make $L\left( t \right)$ in future profits, has made $\overline{P}\left( t \right)$ profit thus far, and has invested $\overline{\cost}\left( t \right)$ thus far. The net profit $N\left( t \right)$ is a supermartingale when $\theta \in \Theta_0$. The optional stopping theorem then implies that there is no agent strategy that would result in positive expected total profit.  This is Proposition~\ref{prop:$e$-values-ia-contract_multi}.
In sum, we construct a random process tracking the total expected profit, and use martingale results to control its expectation for all agent strategies.

\subsection{Agent's best response via dynamic programming}\label{sec:dynamic-programming}

We show how the agent can identify their optimal choice of update function $f_t$ in the multi-round contract by solving a dynamic programming problem in Appendix~\ref{app:multiround_policy}. In particular, we show how to choose $f_t$ at each stage to optimize their final license value after at most~$T$ rounds.

There are three main ingredients. First, we show how to find the optimal choice of $f_t$ in a single round when the agent's objective function is a concave, increasing value function. Second, we argue that it is sufficient to consider only concave, increasing value functions. Lastly, we show how to find the optimal choice in the multi-round setup by dynamic programming. We formulate the multi-round statistical contract as a Markov decision process (MDP), where the agent's policy specifies the license updates $f_t$ and stopping criteria based on the history of the trial so far. We start at the final time point and proceed backward, and the main operation at each time point is computing the optimal single-step update~$f_t$ with respect to a value function determined by previous steps of the algorithm.

When the agent uses a simple Gaussian null~$Z_t\sim \mathcal{N}\left( 0, 1 \right)$ against a simple Gaussian alternative~$Z_t\sim \mathcal{N}\left( \theta_1, 1 \right)$, we show how 
 to explicitly solve the dynamic program after discretizing the license to take on finitely many values in Appendix~\ref{app:multiround_policy}. We use this result to derive an optimal license using dynamic programming for a multi-round agent in a trial with at most 5 rounds, where the price to observe each round (including the first) is $\cost_t = 0.1$. We perform a simulation study to compare the multi-round agent's results to two different one-period agents. The first one-round agent observes~$Z\sim \mathcal{N}\left( \theta_1, 1 \right)$ and pays~$\cost = 0.1$ to run the trial; the second one-round agent has~$5\times$ the amount of data, so $Z\sim \mathcal{N}\left( \theta_1, \frac{1}{\sqrt{5}} \right)$ and pays~$5\cost = 0.5$ to run the trial. For both of these agents, we use the optimal license from Section~\ref{sec:simple_example}.

Figure~\ref{fig-dp1} summarizes the results of our simulation study. The first two panels (a-b) of Figure~\ref{fig-dp1} use a specific alternative value~$\theta_1 = 1.645$. In this case, both the one period agent with~$5\times$ the amount of data and the $T=5$ period agent reach a license value equal to their pre-specified cap with probability close to one. The difference, however, is that the one-period agent pays~$5\cost = .5$ to run the trial, whereas multi-round agent pays~$\tau \cost$, where~$\tau$ is the number of rounds ran, which in this example was less than~$5$ in most cases. In the latter two panels (c-d), we find for a variety of alternatives that the multi-round agent can leverage the cost savings to consistently achieve a comparable or in some cases significantly higher overall profit.

\begin{figure}
	\centering
	\centerline{
	\includegraphics[width=.5\textwidth]{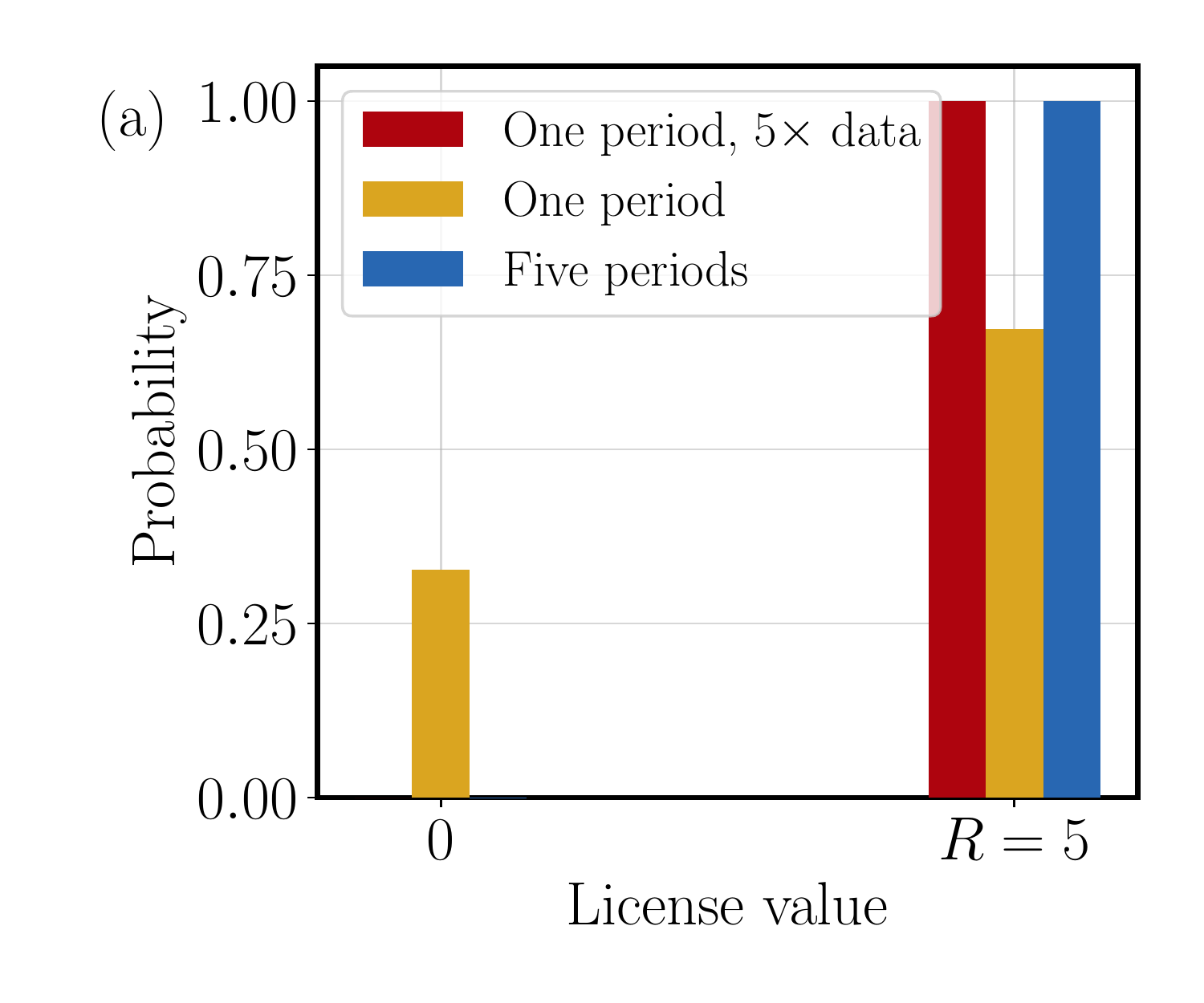}
	\includegraphics[width=.5\textwidth]{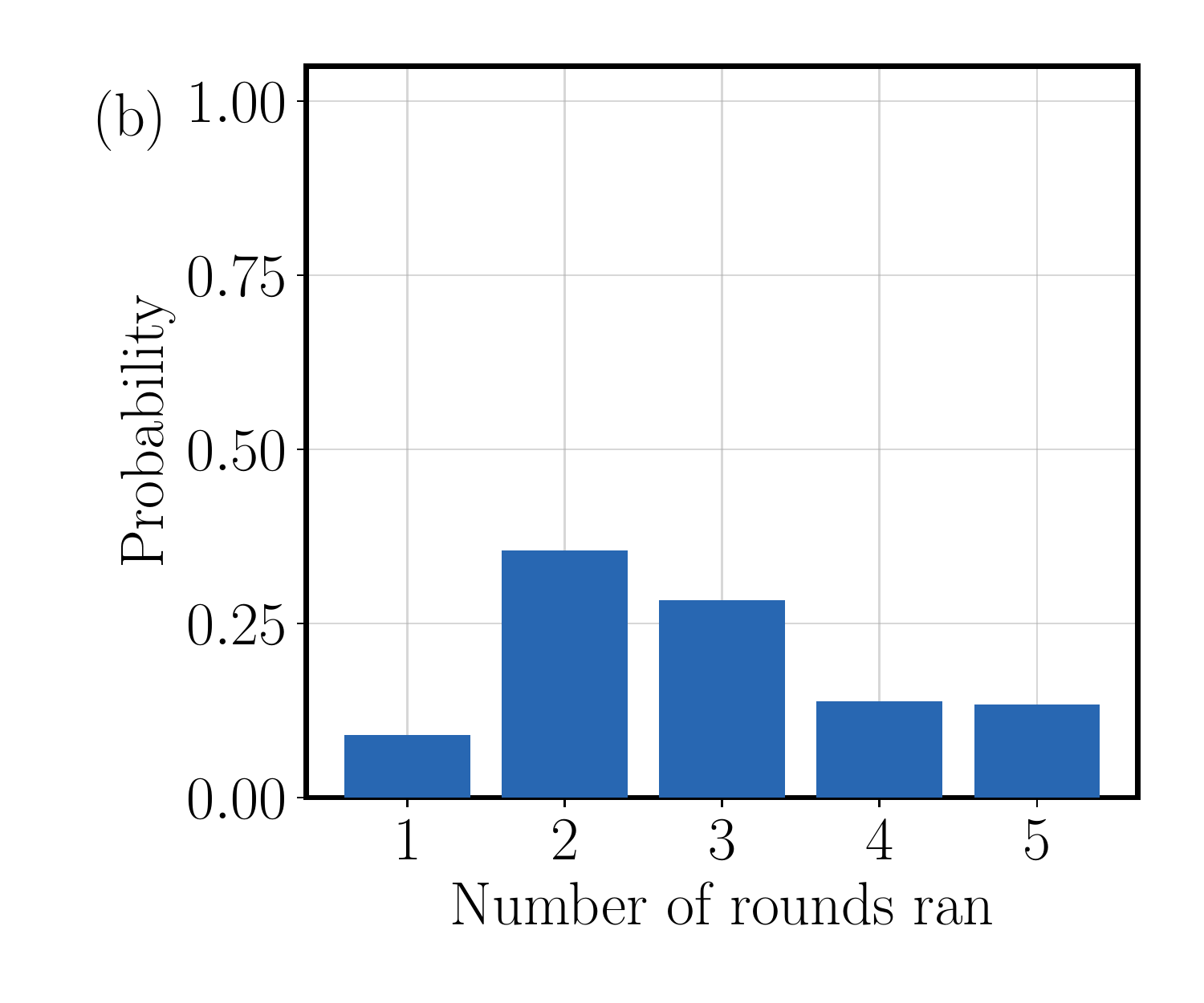}
	}

     \centerline{
	\includegraphics[width=.5\textwidth]{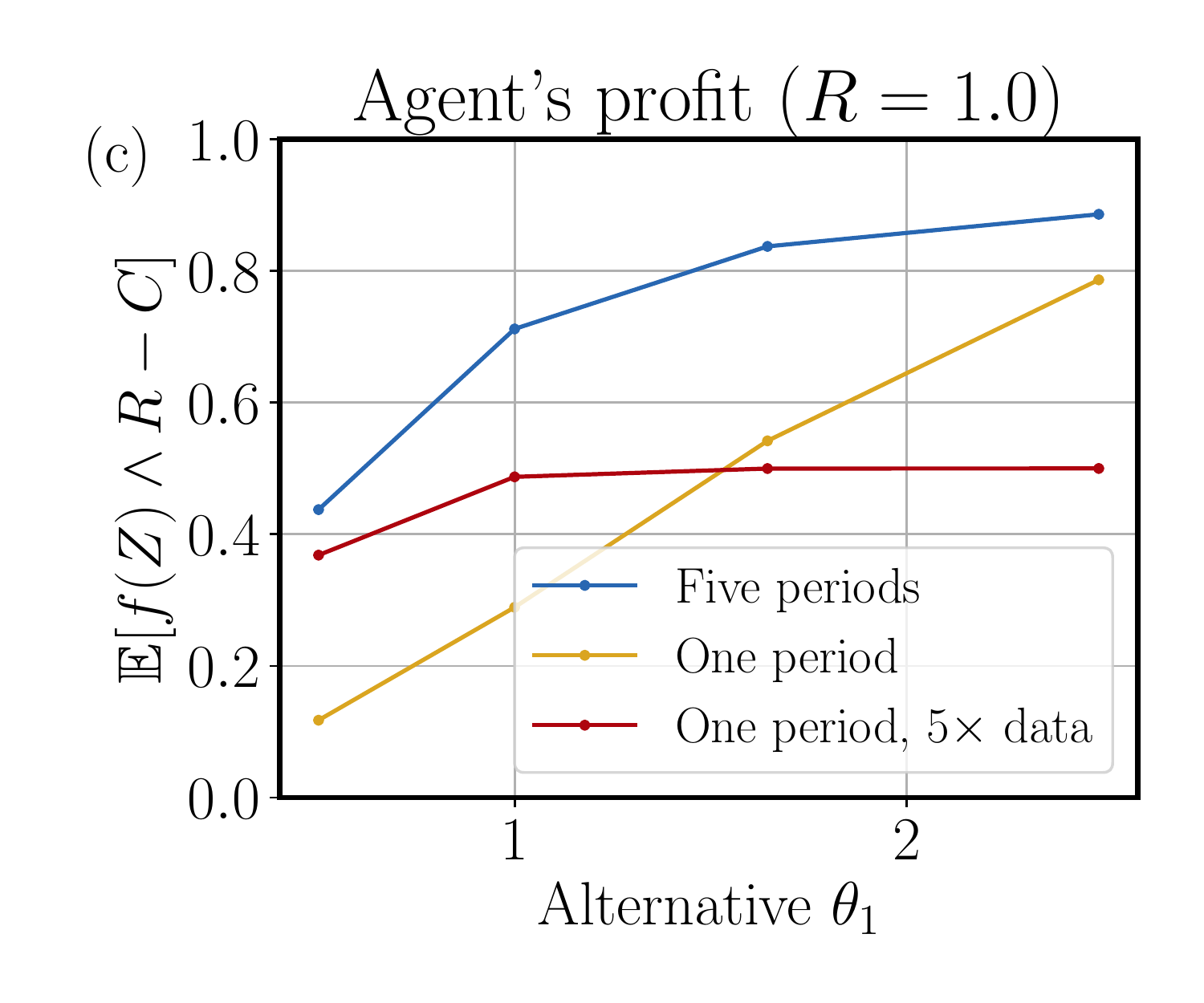}
	\includegraphics[width=.5\textwidth]{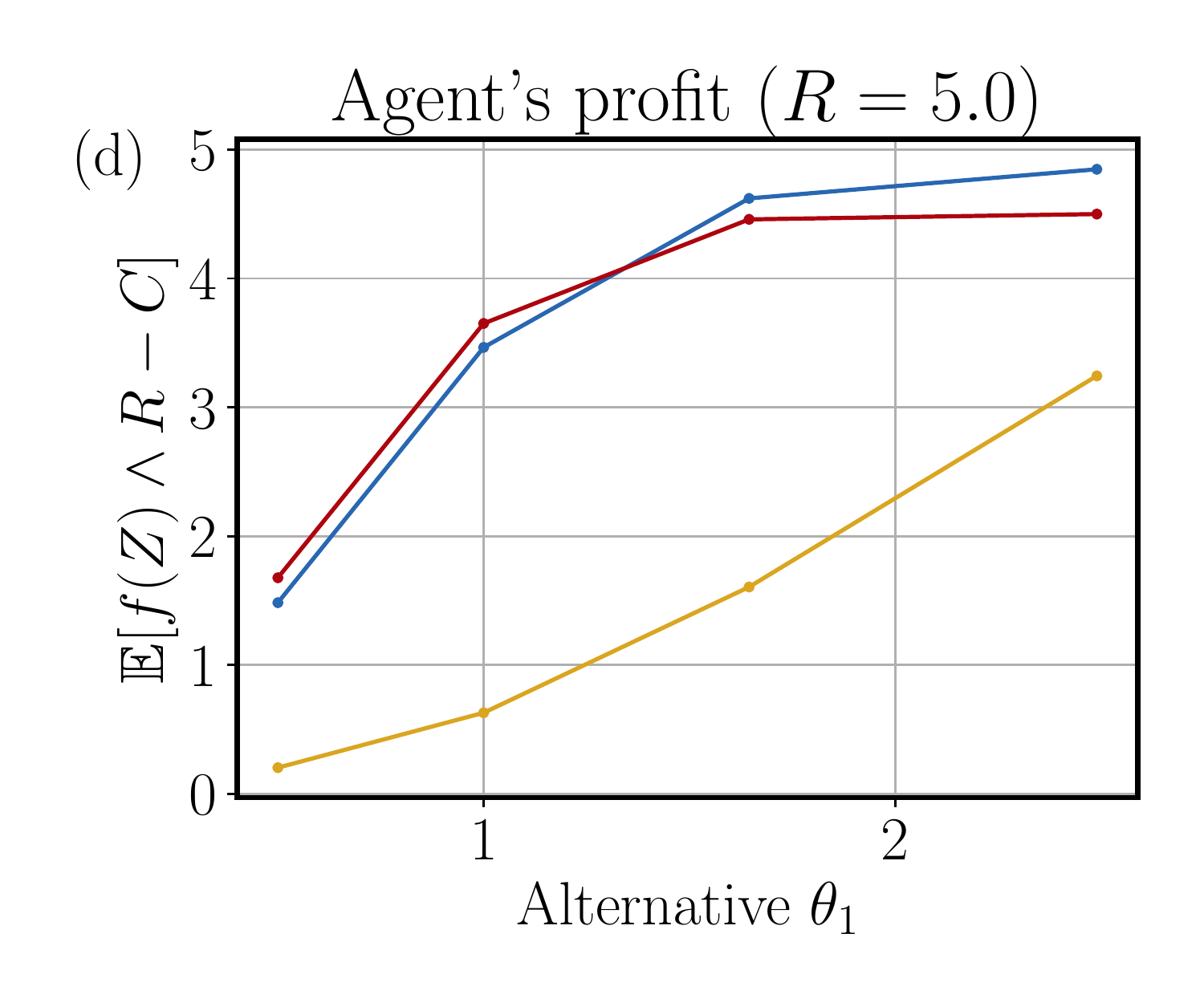}
	}

    \caption{Simulation study comparing a multi-round statistical contract with two one-round contracts. See Section~\ref{sec:dynamic-programming} for a detailed description of each agent's strategy. (a) The terminal license value for each strategy. (b) The distribution of the number of rounds used by the multi-period agent. (c-d) The agent's profit as a function of alternative~$\theta_1$ with a maximal profit of~$\marketsize=1$ and~$\marketsize=5$, respectively.}\label{fig-dp1}

\end{figure}

\subsection{Deriving the agent's best response}
\label{app:multiround_policy}

In this section we show how the agent may derive their optimal multi-round statistical contract. In Section~\ref{sec-mdp} we formulate a Markov Decision Process (MDP) where the update functions $\left( f_t \right)$ correspond to the agent's policy. At each iteration of dynamic programming, the agent must find an optimal $e$-value with a general objective function. We first find the optimal single-round update in Section~\ref{sec-opt-general}.

\subsubsection{Computing the optimal single-round \texorpdfstring{$e$}{e}-value}\label{sec-opt-general}

Consider an agent seeking to maximize $\E_{\theta_1}\left[ v\left( f\left( Z \right) \right) \right]$ over the function $f$ subject to $\E_0\left[ f \right]\le \cost$, for some value function $v : \mathbb{R}_+ \to \mathbb{R}_+$. We first show that we may assume that~$v$ is nondecreasing and concave with no loss of generality. 

\subsubsection{Reduction to concave increasing utilities}

Without loss of generality we may assume $v$ is concave. Why? Because if we define the least concave majorant $\bar{v} = \textnormal{LCM}\left( v \right)$, no optimal $f$ can have probability mass on the set
$$A := \left\{ Z: v\left( f\left( Z \right) \right) < \bar{v}\left( f\left( Z \right) \right) \right\}.$$

This would cause a contradiction, as for any such $Z$ in this set, $v\left( f\left( Z \right) \right)$ is below its concave majorant and therefore
$$v\left( f\left( Z \right) \right) < a v\left( f^- \right) + \left( 1-a \right) v\left( f^+ \right),$$

where $f^-$, $f^+$ are nodes on the concave majorant, and $f\left( Z \right) = a f^- + \left( 1-a \right)f^+$. This equation implies that the value can be strictly improved by taking a weighted combination of these two nodes with probability $a$, with no change to the constraint $\E_0\left[ f \right]\le \cost$.
 
Consequently, all optimal solutions under value function $v$ occur where $\bar{v} = v$ and the set $A$ is empty. And, as $\bar{v}$ dominates~$v$ pointwise, its optimal value is at least as large as the optimal value achievable under $v$.

We shall now solve the problem for $\bar{v}$. By a similar logic we can show that for any optimal solution $\bar{f}$ under $\bar{v}$, there exists another solution $\tilde{f}$ which is fully supported on $A^\mathsf{c}$. This solution $\tilde{f}$ is therefore also a valid solution for $v$, and therefore an optimal solution for $v$. 

Without loss of generality we may also assume $\bar{v}$ is nondecreasing: defining the function
$$\tilde{v}\left( x \right) := \sup \limits_{y \le x} \bar{v}\left( y \right),$$
it becomes clear that any solution  under value function $\bar{v}$ which places positive mass on the set 
$$B := \left\{ Z: \bar{v}\left( f\left( z \right) \right) \neq \tilde{v}\left( f\left( z \right) \right) \right\}$$ can be trivially improved by reducing $f$ at some values, which makes the constraint $\E_0\left[ f \right]\le \cost$ easier to achieve.

So it suffices to restrict ourselves to solutions of $\tilde{v}$, and specifically those which limit to the leftmost point within each flat range of $\bar{v}$.

Thus, $\tilde{v}$ is nondecresing and concave: we proceed to solve it in a simple Gaussian location family, and show how to derive an optimal solution supported only on the nodes.

\subsubsection{Maximization under concave and nondecreasing value function}
Consider maximizing over $f \in \mathcal{F}$ the following:
\[
\int \left( p_{\theta_1}v\left( f \right) - \lambda p_{\theta_0}f \right) \textnormal{d}\mu.
\]
Pointwise maximization yields
\[
f^{\textnormal{br}}\left( z \right) \coloneqq {\arg\max}_{a\in \mathbb{R}_+}\, p_{\theta_1}\left( z \right)v\left( a \right) - \lambda p_{\theta_0}\left( z \right)a,
\]
where $\lambda > 0$ is chosen so that $\mathbb{E}_0\left[ f^{\textnormal{br}}\left( Z \right) \right] = \cost$.

Concretely, consider a Gaussian location model with~$\theta_0 = 0$ and $\theta_1 = \theta$. The likelihood ratio is 
\[
L =\exp\left( \theta Y - \frac{\theta^2}{2} \right).
\]
Note that
\begin{align*}
\mathbb{E}_0\left[ E_{\lambda}\left( L \right) \right]
&= \mathbb{E}_0\left[\max\left\{x : \partial_-v\left( x \right)\ge \frac{\lambda}{L}\right\}\right].
\end{align*}
If $v$ is continuously differentiably, 
\begin{align*}
\mathbb{E}_0\left[ E_{\lambda}\left( L \right) \right]
&= \mathbb{E}_0\left[\left( v' \right)^{-1}\left( \lambda\exp\left( \frac{\theta^2}{2} - Y\theta \right) \right)\right].
\end{align*}
For example, if $v\left( x \right) = 2\sqrt{x}$ then $v'\left( x \right) = x^{-1/2}$ and $\left( v' \right)^{-1}\left( t \right) = t^{-2}$ so
\begin{align*}
\mathbb{E}_0\left[ E_{\lambda}\left( L \right) \right]
&= \lambda^{-2}e^{-\theta^2}\mathbb{E}_0\left[e^{2Y\theta}\right] \\
&= \lambda^{-2}e^{\theta^2},
\end{align*}
yielding $\lambda^* = \exp\left( \theta^2/2 \right)$, and 
$$
E_{\lambda^*}
= \exp\left( 2Y\theta - 2\theta^2 \right).
$$
Furthermore, the optimal value under the alternative is 
$$
\mathbb{E}_\theta\left[v\left( E_{\lambda^*} \right)\right]
= 2\mathbb{E}_\theta\left[\exp\left( Y\theta - \theta^2 \right)\right]
= 2\exp\left( \theta^2/2 \right).
$$

If $v$ is piecewise linear and bounded, with knots $\left( x_k \right)_{k=1}^K$ and left-hand slopes $\left( v'_k \right)_{k=1}^K$ (prepending $x_0=0$ and $v'_0 = \infty$), we have
\begin{align*}
\mathbb{E}_0\left[ E_{\lambda}\left( L \right) \right]
&= \mathbb{E}_0\left[\max\left\{x_k : v'_k\ge \frac{\lambda}{L}\right\}\right] \\
&= \mathbb{E}_0\left[\max\left\{x_k : v'_k\ge \lambda\exp\left( \frac{\theta^2}{2} - Y\theta \right)\right\}\right] \\
&= \int_{-\infty}^\infty \phi\left( y \right)\left[\max\left\{x_k : v'_k\ge \lambda\exp\left( \frac{\theta^2}{2} - y\theta \right)\right\}\right]\text{d}y \\
&= \sum_{k=0}^K x_k\int_{A_k} \phi\left( y \right)\text{d}y,
\end{align*}
where 
\begin{align*}
    A_k 
&= \left\{y  : x_k = \max\left\{x_l : v'_l\ge \lambda\exp\left( \frac{\theta^2}{2} - y\theta \right)\right\}\right\} \\
&= \left\{y  : x_k = \max\left\{x_l : y\ge \frac{\theta}{2} - \frac{1}{\theta}\log\left( \frac{v'_l}{\lambda} \right)\right\}\right\}.
\end{align*}
Hence $A_0, \ldots, A_K$ are given by
\begin{align*}
A_0 &= \left( -\infty, \frac{\theta}{2} - \frac{1}{\theta}\log\left( \frac{v'_1}{\lambda} \right)\right] \\
A_k &= \left( \frac{\theta}{2} - \frac{1}{\theta}\log\left( \frac{v'_k}{\lambda} \right), \frac{\theta}{2} - \frac{1}{\theta}\log\left( \frac{v'_{k+1}}{\lambda} \right)\right] \\
A_K &= \left( \frac{\theta}{2} - \frac{1}{\theta}\log\left( \frac{v'_K}{\lambda} \right), \infty \right).
\end{align*}
Finally, 
\begin{align*}
\mathbb{E}_0\left[ E_{\lambda}\left( L \right) \right]
&= x_K\left( 1-\Phi\left( \frac{\theta}{2} - \frac{1}{\theta}\log\left( \frac{v'_{K}}{\lambda} \right) \right) \right) \\
&~~~~~~+ \sum_{k=1}^{K-1} x_k\left( 
\Phi\left( \frac{\theta}{2} - \frac{1}{\theta}\log\left( \frac{v'_{k+1}}{\lambda} \right) \right)
- \Phi\left( \frac{\theta}{2} - \frac{1}{\theta}\log\left( \frac{v'_{k}}{\lambda} \right) \right).
 \right)
\end{align*}
This is monotone decreasing in $\lambda$ so solve for $\lambda^*$ via binary search.

The expected value under the alternative is
\begin{align*}
\mathbb{E}_\theta\left[ E_{\lambda}\left( L \right) \right]
&= x_K\left( 1-\Phi\left( -\frac{\theta}{2} - \frac{1}{\theta}\log\left( \frac{v'_{K}}{\lambda} \right) \right) \right) \\
&~~~~~~+ \sum_{k=1}^{K-1} x_k\left( 
\Phi\left( -\frac{\theta}{2} - \frac{1}{\theta}\log\left( \frac{v'_{k+1}}{\lambda} \right) \right)
- \Phi\left( -\frac{\theta}{2} - \frac{1}{\theta}\log\left( \frac{v'_{k}}{\lambda} \right) \right)
 \right).
\end{align*}
The optimal objective under the alternative is
\begin{multline}
\mathbb{E}_\theta\left[ v\left( E_{\lambda^*}\left( L \right) \right) \right]
= v\left( x_K \right)\left( 1-\Phi\left( -\frac{\theta}{2} - \frac{1}{\theta}\log\left( \frac{v'_{K}}{\lambda} \right) \right) \right) 
\\ + \sum_{k=1}^{K-1} v\left( x_k \right)\left( 
\Phi\left( -\frac{\theta}{2} - \frac{1}{\theta}\log\left( \frac{v'_{k+1}}{\lambda} \right) \right)
- \Phi\left( -\frac{\theta}{2} - \frac{1}{\theta}\log\left( \frac{v'_{k}}{\lambda} \right) \right)
 \right).
\end{multline}

\subsubsection{Formulating the Markov decision process}\label{sec-mdp}

We may extend the approach of the previous section on one-period optimization with concave utility to construct the optimal $e$-value sequence for the multi-period setting.

We shall assume an exact point alternative $\theta_a$ is known (Note: the methods below can in fact be extended to distributions $P_a$ on $\theta_a$ as well, but we must adjoin the evolving sufficient statistics to the Markov Decision Process (MDP) state space---we omit this for simplicity.) If we consider the family of licenses which can take at most $N$ distinct values (noting that such licenses will be effectively dense in the relevant space of all valid licenses as we take arbitrarily large $N$), this multi-round process is a MDP with state space given by the current time and license value action space, $\left( t, L\left( t \right) \right)$, given by either stopping or the selection of an $e$-value $f_{t+1}\left( \cdot \right)$ to represent the next multiplicative evolution of the license, and reward given in the final stage as $u\left( L\left( \tau \right), \sum \limits_{t=1}^\tau \cost_t \right)$. 
(The transition probabilities are implied by the distribution of license values, $L\left( t+1 \right) \sim P_{a,t}$.)

Consequently, this MDP admits a dynamic programming solution:

$$v_{i-1}\left( s \right) := \max \left\{u\left( L\left( i-1 \right), \sum \limits_{t=1}^{i-1} \cost_t  \right) , \max \limits_a \left[ \sum \limits_{s'} P_a\left( s,s' \right) v_{i}\left( s' \right)  \right]\right\},$$
where $v_T\left( s \right) = u\left( L\left( T \right),\sum \limits_{t=1}^T \cost_t \right)$ initializes the backward induction. Note that the problem of finding the correct action $a$ is equivalent to the one-period optimization problem, with utility function determined by $v_i\left( s' \right)$. Thus, we will boil down the multi-period problem to a sequence of such one-dimension problems: we must alternate between the two steps: (a) find the optimal one-period $e$-value (action) under $\left( i, l \right)$ for a collection of values $l$, and (b) determine the next iteration of the value function by finding expected utility of that one-period license, to evaluate $v_i\left( s \right)$, and computing the appropriate maximum. To achieve (a) practically, we discretize by constraining ~$f_t$ such that $L\left( t \right) \in \mathcal{D} = \left\{ 0, \varepsilon, \ldots, K\varepsilon \right\}$ for each $t$. As a result, we are able to compute a full optimal strategy among licenses which are supported on $\mathcal{D} = \left\{ 0, \varepsilon, \ldots, K\varepsilon \right\}$ at all times. Finally, $\epsilon$ is chosen small to ensure near-optimal performance.

\end{document}